\newtheorem{remark}{Remark}[section]
\newtheorem{theorem}{Theorem}[section]
\newtheorem{proposition}{Proposition}[section]
\newtheorem{lemma}{Lemma}[section]
\newcommand{\Ga}{\Gamma}
\newcommand{\ga}{\gamma}
\newcommand{\BB}{{\mathbb B}}
\newcommand{\XX}{{\mathbb X}}
\newcommand{\R}{{\mathbb R}}
\newcommand{\N}{{\mathbb N}}
\newcommand{\X}{{\R^d}}
\newcommand{\La}{\Lambda}
\newcommand{\la}{\lambda}
\newcommand{\B}{\mathcal{B}}
\newcommand{\K}{\mathcal{K}_{\rm{adm}}}
\newcommand{\M}{\mathcal{M}}
\newcommand{\CB}{\mathcal B}
\begin{document}

\title{Invariant measures for spatial contact model in small dimensions}

\author{Yuri Kondratiev\thanks{Fakultat fur
Mathematik, Universitat Bielefeld, 33615 Bielefeld, Germany
(kondrat@math.uni-bielefeld.de).} \and Oleksandr Kutoviy\thanks{Fakultat fur
Mathematik, Universitat Bielefeld, 33615 Bielefeld, Germany
(kutoviy@math.uni-bielefeld.de).} \and Sergey Pirogov\thanks{
Institute for Information Transmission Problems, Moscow, Russia
(s.a.pirogov@bk.ru).} \and Elena Zhizhina\thanks{Institute for
Information Transmission Problems, Moscow, Russia (ejj@iitp.ru).} }

\date{}

\maketitle

\begin{abstract}
We study invariant measures of continuous contact model in small dimensional
spaces ($d =1,2$). We prove that this system has the one-parameter set of invariant measures
in the critical regime provided the dispersal kernel has a heavy tail. The convergence
to these invariant measures for a broad class of initial states is established.
\\

Keywords: continuous contact model; heavy tail distribution; non-equilibrium Markov process; correlation functions
\end{abstract}

\section{Introduction}

In the present paper we are dealing with a continuous analog of the well-known lattice contact process \cite{Lig1985}. The continuous contact process is a particular case of the general birth-and-death processes in the continuum. The existence problem for the continuous version of the lattice contact model in terms of the corresponding spatial Markov process was thoroughly analyzed in \cite{KS}. In the spatial plant ecology such process describes a Markov evolution of a plant population with an independent seed production by each parent plant (accordingly to a dispersal probability density $0\leq a\in\mathrm{L}^{1}(\R^{d})$) and independent exponentially distributed random life time with parameter 1 for each of them (global mortality rate).
One of the main features of this model is the clustering of the system, i.e. particles are grouped into large clouds of high density, which are located at large distances from each other.  It is worth noting that the appearance of a limiting invariant state is only possible in the so-called critical regime (i.e., there is a certain balance between birth and death). As it was recently shown in \cite{KKP}, in the case of the critical regime and $d\geq 3$, there exists continuum of invariant measures parametrized by the density values. These invariant measures are described by a simple recurrent relation between their correlation functions and create a concrete (and up to our knowledge, completely new) class of random point fields. For all other regimes, the density of the system tends either to $\infty$ or to $0$ as time grows. The existence of a stationary regime in the marked contact model with a compact spin space was proved in \cite{KPZh}.

The present paper concerns the asymptotic behavior and invariant measures for continuous contact model in the cases $d=1,2$.  For such low dimensional situation, it is proved that the continuous contact model has invariant measure if the tail of the dispersal kernel $a$ is "heavy" enough. In this case, the critical contact process starting with an admissible initial state converges to the equilibrium measure uniquely defined by the density of the initial state. Note, that the restriction to the class of dispersal kernels with "heavy" tails for $d=1,2$ is hypothetically related to the instability of the spectrum of Schr\"odinger operator at low dimensions.
In 1929, R. Peierls made a crucial discovery that at low dimensions ($d=1,2$), contrary to the three dimensional situation, an arbitrary small potential well leads to the emerging of the bound state. For the non local Schr\"odinger operators corresponding to the contact model such behavior was
also established in \cite[Theorem 9]{KMPZ}. It means that even small local deviation downwards from the critical mortality level leads to the exponential growth of the population whenever  $d=1,2$ and dispersal kernels have sufficiently light tails. Heavy tails of dispersal kernels appear to make the critical regime more stable contrary to light tails.

For the critical contact model with light tails in low dimensions the typical configuration has the following
structure as $t\to\infty$: there is a collection of “over-populated cities” separated by large empty planes. The pair correlation function grows unlimitedly as $t \to \infty$ and so the distribution of the corresponding random field has no limit, see Remark \ref{Remark3} in Section 5.  Thus the light tail (or the existence of the second
moment) of the dispersal kernel in small dimensions
results to the phenomenon of "clustering". Similar
results have been obtained for a certain type of critical branching random walks, see e.g. \cite{BS}, \cite{GW}.

In this work we formulate conditions on dispersal kernels which ensure the existence
of correlations functions of the corresponding stationary measures of the model in the critical regime in low dimensions.
We also specify relations between solutions to the corresponding Cauchy
problem and these stationary regimes.

\section{Main results}
\subsection{The model}


Let ${\B}({\X})$ be the family of all Borel sets in ${\X}$, $d\geq
1$ and let  ${\B}_{\mathrm{b}} ({\X})$ denote the system of all bounded
sets from ${\B}({\X})$.

The continuous contact model is regarded as a spatial Markov process which is a
particular case of the general birth-and-death process in the continuum, see e.g.
\cite{KKP, KS}. The phase space of such processes is the space of locally finite configurations in $\R^{d}$. Namely,
\begin{equation} \label{confspace}
\Ga =\Ga\bigl(\X\bigr) :=\Bigl\{ \ga \subset \X \Bigm| |\ga\cap\La
|<\infty, \ \mathrm{for \ all } \ \La \in {\B}_{\mathrm{b}}
(\X)\Bigr\}.
\end{equation}
Here $|\cdot|$ denotes the number of elements of a~set. We can identify each
$\ga\in\Ga$ with the non-negative Radon measure $\sum_{x\in
\gamma }\delta_x\in \M(\X)$, where $\delta_x$ is
the Dirac measure with unit mass at $x$, $\sum_{x\in\emptyset}\delta_x$ is, by definition, the zero measure, and $\M(\X)$ denotes the space of all
non-negative Radon measures on $\B(\X)$. This identification allows us to endow
$\Ga$ with the topology induced by the vague topology on
$\M(\X)$, i.e. the weakest topology on $\Ga$
with respect to which all the mappings
\begin{equation}\label{gentop}
    \Ga\ni\ga\mapsto \sum_{x\in\ga} f(x)\in{\R}
\end{equation}
are continuous for any $f\in C_0(\X)$ that is the set of all continuous functions on $\X$ with compact supports.  It is worth noting that
the vague topology can be metrizable in such a way that $\Ga$ becomes a~Polish space (see e.g. \cite{KK2006} and references therein). The topological space $\Gamma(X)$ for any $X\in \CB(\X)$ can be defined in a similar way. The Borel $\sigma$-algebra on $\Gamma(X)$ is denoted by $\CB(\Gamma(X))$.

The spatial contact model is given by a heuristic
generator defined on a proper class of functions  $F:
\Gamma \to \R$ as follows:
\begin{align}
 \begin{aligned}\label{generator}
(L F)(\gamma) &= \sum_{ x \in \gamma}\left[F(\gamma \backslash \{x\}) - F(
\gamma)\right] \\
&+   \int\limits_{\R^{d}} \sum_{x \in \gamma}  a(x-y) (F(\gamma \,\cup
\{y\} ) - F(\gamma)) dy,
 \end{aligned}
 \end{align}
where $dx $ is the Lebesgue measure on $\R^{d}$. In the sequel, for simplicity of notations, we just write $x$
instead of $\{x\}$. The first term in \eqref{generator} corresponds to the death of the particles. Namely, each $x$ of the configuration $\gamma\in\Ga$ dies with the death rate $1$. The second term of \eqref{generator} describes the birth of a new particle at the point $y$ with the
 birth rate density $b(y,\gamma):=\sum_{x \in \gamma} a(x-y)$.

We assume that  $a$ is a non-negative function on $\R^{d}$ satisfying the following conditions:
\begin{enumerate}
\item {\it Critical regime} condition:
\begin{equation}\label{2a}
\| a \|_{L^1} \ = \  \int\limits_{\R^{d}} a(u) du \ = \ 1;
\end{equation}
\item {\it Regularity} condition:
\begin{equation}\label{2b}
 \hat a (p) \ := \ \int\limits_{\R^{d}} e^{-i(p,u)} a(u) du \in
L^1(\R^{d}),
\end{equation}
where the symbol $(\cdot\,,\cdot)$ stands for scalar product in $\X$;
\item {\it Heavy tail} condition:
\begin{equation}\label{2c.1}
a(x) \ \sim \ \frac{1}{|x|^{\alpha+1}} \quad \mbox{as } \; |x| \to \infty, \; 0<\alpha<1, \; \mbox{ in the case } \; d=1;
\end{equation}
\begin{equation}\label{2c.2}
a(x) \ \sim \ \frac{1}{|x|^{\alpha+2}} \quad \mbox{as } \; |x| \to \infty, \; 0<\alpha<2, \; \mbox{ in the case } \; d=2.
\end{equation}
\end{enumerate}

\begin{remark}

(a) Conditions $1$ and $2$ imply that $a$ and $\hat a$ are bounded continuous functions vanishing at infinity. Moreover, the non-negativity of $a$  yields
\begin{equation}\label{less1}
|\hat a (p)|<1,\quad \textrm{for all} \quad p \neq 0.
\end{equation}

(b) Condition $3$ is crucial to ensure the convergence of the integral (\ref{L1}).

(c) In general, the function $a$ is not even.

\end{remark}

\subsection{Basic facts and notations}

For the technical purposes related to the approach needed to derive time
evolution equations for correlation functions (see e.g. \cite{KM2008}), we consider the space of finite configurations whose natural topology is different from the vague one considered on $\Gamma$. The  space of finite configuration  is defined  by
$$
\Gamma_0   =  \Gamma_0 (\R^{d})  =  \bigsqcup_{n \in \N \cup \{0 \} }
\ \Gamma_0^{(n)},
$$
where
$$
\Gamma_0^{(n)}  =  \Gamma_{0,\X}^{(n)}  =  \{ \eta \subset \R^{d} : \ |\eta| = n \}
$$
is the space of $n$-point configurations. The space $\Gamma_0$ is equipped with the topology of the disjoint union. We denote the corresponding  Borel $\sigma$-algebra to this topology by $\mathcal{B}(\Gamma_0)$.
The space of $n$-point configurations in $Y\in\CB(\X)$, denoted by $\Gamma_{0,Y}^{(n)}$, can be defined analogously to $\Gamma_{0,\X}^{(n)}$. For the space of finite configurations in $Y\in\CB(\X)$ we will use the symbol $\Gamma_{0}(Y)$.
It is worth noting that $\Gamma_{0}$ is a subset of $\Gamma$.

Next, we describe the classes of functions and measures on $\Gamma_{0}$ and $\Gamma$ which will be used in the sequel. A set $M\in\mathcal{B}(\Gamma_0)$ is called bounded if there exists $\Lambda\in\CB_b(\X)$ and $N\in\N$ such that $M\subset\bigsqcup_{n=1}^{N}\Gamma_{0,\Lambda}^{(n)}$. The class $B_{bs}(\Gamma_0)$ stands for the set of all bounded measurable functions on $\Gamma_0$ which have bounded support, i.e., $G\in B_{bs}(\Gamma_0)$ if $G|_{\Gamma_0\setminus M}=0$ for some bounded $M\in \mathcal{B}(\Gamma_0)$. The class
${\cal F}_{cyl}(\Gamma)$ denotes the set of cylindrical functions on $\Gamma$, i.e., the set of all measurable functions $F$ on $(\Gamma,\CB(\Gamma))$, which are measurable with respect to $\CB(\Gamma({\Lambda}))$ for some $\Lambda\in \CB_b(\X)$. Any $F\in{\cal
F}_{cyl}(\Gamma)$ determines a set $\Lambda\in\CB_{b}(\X)$ such that $F(\gamma)=F(\gamma_\Lambda)$ for all $\gamma\in\Gamma$.

The analog of Lebesgue measure on $(\Gamma_0,\CB(\Gamma_0))$ is the {\it Lebesgue-Poisson} measure. It is defined on $(\Gamma_0, \CB(\Gamma_{0}))$ by \[\lambda_z:=\delta_{{\emptyset}}+\sum_{n=1}^\infty \frac{z^n}{n!}m^{(n)},\quad z>0,\]
where $m^{(n)}$ is the projection of the product Lebesgue measure $(dx)^{n}$ considered on $(\X)^{n}$ to $(\Gamma_0^{(n)}, \CB(\Gamma_{0}^{(n)}))$. Throughout the paper, we take the parameter $z$ to be 1 using the notation $\lambda:=\lambda_1$ for this case.


Next, we introduce the  mapping between $B_{bs}(\Gamma_0)$ and ${\cal F}_{cyl}(\Gamma)$, which turns out to be crucial for studying functions and measures on $\Gamma_{0}$ and $\Gamma$. For any function $G\in B_{bs}(\Gamma_0)$, we define the $K$-transform of $G$ as
\begin{equation}
KG(\gamma):=\sum_{\eta\Subset\gamma}G(\eta),\quad \gamma\in \Gamma.\label{K-transform}
\end{equation}
Here, with the notation $\eta\Subset \gamma$ we have the summation over all finite subconfigurations $\eta
\in \Gamma_0$ of the infinite configuration $\gamma \in \Gamma$. One has to emphasize that this mapping is linear, positivity preserving and injective (see, e.g. \cite{KK2002}).



Denote by ${\cal M}^1_{fm}(\Gamma)$ the set of all probability
measures $\mu$ which have finite local moments of all orders, i.e.
$$
\int_{\Gamma} |\gamma_{\Lambda}|^n \ \mu (d \gamma) \ < \ \infty
$$
for  all $\Lambda \in {\cal B}_b(\R^{d})$ and $n \in N$.  The {\em Poisson measure} $\pi$ on $\Gamma$ with intensity measure $dx$ is an example of such measure. It  is defined on
$(\Gamma, \CB(\Gamma))$ by
  \begin{equation}\label{Poisson}
    \pi \bigl(\Gamma_{0,\Lambda}^{(n)}\bigr)=\frac{\bigl(\rm{vol}(\Lambda)\bigr)^n}{n!}\exp\bigl\{ -\rm{vol}(\Lambda)\bigr\}, \qquad \Lambda\in\CB_{b}(\X), \quad n\in\N\cup\{0\},
  \end{equation}
  where $\rm{vol}(\Lambda)$ is the Lebesgue mass of $\La$.  Note that \eqref{Poisson} determine the Poisson measure on $\B(\Ga)$ uniquely.

If a measure
$\mu \in {\cal M}^1_{fm}(\Gamma)$ is locally absolutely continuous
with respect to the Poisson measure, then there exists the {\it system of correlation
functions} $k_\mu:\Gamma_{0}\to\R_+$ of the measure $\mu$, see e.g. \cite[Chapter 4]{R}.
For all $G\in B_{bs}(\Gamma_{0})$, it satisfies
\begin{equation}\label{F6}
\int_{\Gamma} (K G)(\gamma) \mu (d \gamma) = \int_{\Gamma_{0}} G(\eta)k_{\mu}(\eta) \lambda (d \eta).
\end{equation}
\begin{remark}
Any $k_{\mu}:\Gamma_{0}\to\R_+$ can be written as a sequence of functions $k^{(n)}\colon(\X)^n\to\R_{+}$ defined by
 \[
  k^{(n)}(x_1,\dotsc,x_n)=\begin{cases}
                                                           k_{\mu}(\{x_1,\dotsc,x_n\}),&\text{ if }(x_1,\dotsc,x_n)\in\widetilde{\left(\mathbb{R}^{d}\right)^n},\\
                                                           0,&\text{ otherwise},
                                                          \end{cases}
 \]
 \[\widetilde{(\mathbb{R}^{d})^{n}}:=\Big\{(x_1,\cdots,x_n)\in (\R^{d})^{n}~|~ x_k\neq x_l,\quad \mbox{if}\quad k\neq l\Big\}.\]
 The function $k^{(n)}$ is called the $n$-point correlation function of the measure $\mu$.
\end{remark}
It will cause no confusion if we use function on $\Ga_{0}$ and the collection of symmetric functions on $\widetilde{\left(\mathbb{R}^{d}\right)^n}$ in the sense of the previous remark. We define ${\cal M}^1_{\rm{corr}}(\Gamma)$ to be the subclass of ${\cal M}^1_{fm}(\Gamma)$ consisting of those probability measures on $\Ga$ for which the corresponding correlation functions exist.

\subsection{Time evolution of correlation functions}

Next we follow the general scheme to study the forward Kolmogorov equation for the continuous contact model proposed in \cite{KKP}. For the convenience of the reader we repeat below some details of this approach, thus making our exposition self-contained.

The existence problem for a  Markov process with a priori given form of a generator $L$ is a challenging problem in general. On the other hand, the evolution of an initial  distribution in the course of a stochastic dynamics is an important object and it deserves a special attention. The existence of such evolution in our case may be realized through
the {\em forward Kolmogorov} (or {\em Fokker--Planck}) equation with the evolution operator $L$ for probability measures (states) on the configuration space $\Gamma$, i.e.
\begin{equation}\label{FPE-init}
  \frac{d}{d t} \mu_t(F) = \mu_t(LF),
  \quad t>0, \quad \mu_t\bigr|_{t=0}=\mu_0,
\end{equation}
where $$\mu(F):=\int_\Ga F(\ga)\,d\mu(\ga)$$ and  $F$ is an arbitrary function from an appropriate set ${\cal F}$ for which both sides of \eqref{FPE-init} make sense. For the purposes of the proposed approach we assume that $\cal F$ includes $K\bigl(B_{bs}(\Gamma_0))$.
The mere existence of the solution to \eqref{FPE-init} does not necessarily implies the existence of the corresponding correlation function at each moment of time. Having in mind applications, it is very important to construct the evolution of initial correlation function as it provides the most of statistical characteristics of the studied process. Hence, we suppose now that a solution $\mu_t$ to \eqref{FPE-init}
exists and $\mu_t\in {\cal M}^1_{\rm{corr}}(\Gamma)$ for any  $t>0$
provided $\mu_0\in {\cal M}^1_{corr}(\Gamma)$.
Then, we consider the corresponding correlation function
$k_t:=k_{\mu_t}$ for any $t\geq0$.

Assume one can calculate
$K^{-1}LF$ for $F\in{\cal F}$. In this case, we are able to rewrite
\eqref{FPE-init} as follows
\begin{equation}\label{ssd0}
  \frac{d}{d t} \langle K^{-1}F, k_t\rangle
  = \langle K^{-1}LF, k_t\rangle,\quad t>0, \quad
  k_t\bigr|_{t=0}=k_0,
\end{equation}
for all $F\in {\cal F}$ for which both sides of \eqref{FPE-init} make sense. Here the pairing
between functions on $\Ga_0$ is given by
\begin{equation}
\left\langle G,\,k\right\rangle
:=\int_{\Ga _{0}}G(\eta) k(\eta) \,d\la(\eta). \label{duality}
\end{equation}
Let us recall that by the definition of the Lebesgue-Poisson measure we have
\begin{equation*}\label{duality-intro}
  \langle G,k \rangle=
  \sum_{n=0}^\infty \frac{1}{n!} \int_{(\X)^n}
  G^{(n)}(x_1,\ldots,x_n)
  k^{(n)}(x_1,\ldots,x_n)\,dx_1\ldots dx_n,
\end{equation*}
Next, if we substitute $F=KG$, $G\in
B_{bs}(\Gamma_0)$ into \eqref{ssd0} (recall that $K\bigl(B_{bs}(\Gamma_0))\subset \cal F$), we derive
\begin{equation}\label{ssd}
  \frac{d}{d t} \langle G, k_t\rangle
  = \langle \widehat{L}G, k_t\rangle, \quad t>0, \quad
  k_t\bigr|_{t=0}=k_0,
\end{equation}
for all $G\in B_{bs}(\Gamma_0)$. Here we suppose that the operator
\begin{equation*}\label{Lhat}
    (\widehat{L}G)(\eta ) :=( K^{-1}LKG)(\eta),\quad \eta\in\Ga_0
\end{equation*}
is defined at least point-wisely for all $G\in B_{bs}(\Gamma_0)$.
As a result, we will be interested now in a solution to the equation
\begin{equation}\label{QE}
  \frac{\partial k_t}{\partial t}
  = \widehat{L}^* k_t, \quad t>0, \quad
  k_t\bigr|_{t=0}=k_0,
\end{equation}
where $\widehat{L}^*$ is dual operator to $\widehat{L}$ with respect to the
duality \eqref{duality}, i.e.,
\begin{equation}
\int_{\Ga _{0}}(\widehat{L}G)(\eta) k(\eta) \,d\la(\eta)
=\int_{\Ga _{0}}G(\eta) (\widehat{L}^*k)(\eta) \,d\la(\eta). \label{dualoper}
\end{equation}

For the derivation of the formula for $\hat L $ as well as the formula for $\widehat{L}^*$ in the case of $L$ given by \eqref{generator} we refer the reader  to \cite{KKP}. According to the latter reference, the operator $\hat L $
has the following form on functions  $G \in
B_{bs}(\Gamma_0)$:
\begin{align}
 \begin{aligned}\label{prop1}
(\hat L G)(\eta)  = & - |\eta| G(\eta)  \  + \  \int\limits_{\R^{d}}
\sum_{x \in \eta} a(x-y) G((\eta \backslash x) \cup y) dy \\
& + \int_{\R^{d}} \sum_{x \in \eta} a(x-y) G(\eta \cup y) dy.
\end{aligned}
\end{align}
The evolution equations for the system of $n$-point correlation functions corresponding to the continuous contact model have the following recurrent forms:
\begin{equation}\label{59}
\frac{\partial k_{t}^{(n)}}{\partial t} \ = \ \hat L_n^{\ast} k_{t}^{(n)} \
+ \ f_{t}^{(n)}, \quad n\ge 1; \qquad k_{t}^{(0)} \equiv 1.
\end{equation}
Here $f_{t}^{(n)}$ are functions on $(\R^{d})^{n}$ defined for $n \ge 2$ by
\begin{equation}\label{f}
f_{t}^{(n)}(x_1, \ldots, x_n) \ = \ \sum_{i=1}^n  k_{t}^{(n-1)}(x_1,
\ldots,\check{x_i}, \ldots, x_n) \sum_{j\neq i}^n a(x_i - x_j),
\end{equation}
and  $f_{t}^{(1)} \equiv 0$. The form of the operator $\hat L^{\ast}_n, \; n \ge 1, $ is
given by
\begin{align}
 \begin{aligned}\label{korf}
 \hat L^{\ast}_n k^{(n)}(x_1, &\ldots, x_n)  =  - n \,k^{(n)}(x_1,
\ldots, x_n) \\
& + \sum_{i=1}^n \int\limits_{\R^{d}} a(x_i- y) k^{(n)}(x_1, \ldots, x_{i-1}, y,
x_{i+1}, \ldots, x_n) dy.
\end{aligned}
\end{align}

It is worth pointing out that the proper choice of a countably normed space for the Cauchy problem \eqref{QE} containing the system of correlation functions at any $t \ge 0$ is the important step of our constructions. The structure of this space is similar to the Fock space in quantum mechanics, but instead of the $L_2$-norm we use here the collection of sup-norms.
Note, that the choice of integrable correlation functions would mean that our stochastic dynamics evolves through finite configurations only. So we can not use $L_2$-norm (as well as $L_1$-norm) for our system.


Let $\BB((\R^{d})^n)$ be the Banach space of all measurable real-valued bounded functions on $(\R^{d})^n$ with the $\sup$-norm. We denote by
$\BB_{\rm{inv}}((\R^{d})^n)$ the subset of $\BB((\R^{d})^n)$, which are additionally translation invariant, i.e., a function $\varphi\in\BB_{\rm{inv}}((\R^{d})^n)$ if $\varphi\in \BB((\R^{d})^n)$ and for any $(w_1, \ldots, w_n)\in(\X)^{n}$
$$
\varphi (w_1+u, \ldots, w_n+u) = \varphi (w_1, \ldots, w_n), \quad \forall u \in \R^{d}.
$$
It is easily seen that $\BB_{\rm{inv}}((\R^{d})^n)$ is a closed subset in $\BB((\R^{d})^n)$ and, hence, it is the Banach space with  respect to the $\sup$-norm. By abuse of notation we continue to write $\XX_{n}$ for $\BB_{\rm{inv}}((\R^{d})^n)$, $n\geq 1$. The collection of sup-norms for $k^{(n)}$ defines the structure of countably normed space for the systems of correlation functions.

\begin{remark}
Consider the operator $\hat L_n^{\ast}$ as an operator on the Banach space $\XX_{n}$ for any $n\geq 1$. It is a simple matter to check that it is bounded linear operator in $\XX_{n}$ and in  $\BB((\R^{d})^n)$. The arguments similar to those in \cite[p.~436]{EN} show that the 
solution to the Cauchy problem \eqref{59} in  $\XX_{n}$ with arbitrary initial values $k_{0}^{(n)}\in\XX_{n}$ exists and is unique provided $f_{t}^{(n)}$ is constructed recurrently via the
solution to the same Cauchy problem \eqref{59} for $n-1$.
\end{remark}

To study the ergodic properties of the
solution to the system \eqref{59} in Banach spaces $(\XX_{n})_{n\geq 1}$  we assume that initial data belong to the following admissible class of functions. We denote this class as $\K$. It is defined by $$\K=\cup_{\varrho>0}\K(\varrho),$$ where
$\K(\varrho)$ consists of functions $k:\Gamma_{0}\to \R_{+}$ such that
\begin{equation}\label{k0-1}
k^{(0)}\equiv1,\quad k^{(1)}  \equiv \varrho,
\end{equation}
\begin{equation}\label{k0}
k^{(n)} = \varrho^n + r^{(n)},  \quad n\ge 2.
\end{equation}
Here, $r^{(n)}\in \XX_n$ is a symmetric function satisfying for all $(x_1, \ldots, x_n)\in(\X)^{n}$
\begin{equation}\label{r0}
r^{(n)}(x_1, \ldots, x_n) \le D C^{n-1} ((n-1)!)^2 \ \sum_{\substack{i,\,j=1\\i \neq j}}^n r^{(n)}_{i j} ( x_i -  x_j),
\end{equation}
where  $r^{(n)}_{ij}: \X\to\R_{+}$,  $1\leq i,j\leq n$, $i\neq j$ are some functions such that
\begin{equation}\label{rij}
r^{(n)}_{i j} \in L^1(\R^{d}), \quad \hat r^{(n)}_{i j} \in L^1(\R^{d}),
\end{equation}
and $C$, $D$ are some positive constants.

In particular, it follows from the above conditions that both $r^{(n)}_{i j}$ and its Fourier transform $\hat r^{(n)}_{i j}$ are bounded continuous functions vanishing at infinity. As a result,
\begin{equation}\label{60}
|k^{(n)}(x_1, \ldots, x_n) - \varrho^n|   =    |r^{(n)}(x_1, \ldots, x_n) | \to  0,
\end{equation}
whenever  $|x_i - x_j| \to \infty$ for all $i \neq j$.
Note that the similar estimates for correlation functions were obtained for low density gases in \cite{DS}.

According to the above mentioned scheme the invariant measures of the contact process belonging to the class
${\cal M}^1_{\rm{corr}}(\Gamma)$ are described in terms of the corresponding correlation functions
$\{k^{(n)}\}_{n\geq 0}$ as positive solutions to the following system:
\begin{equation}\label{Last}
\hat L^{\ast}_n k^{(n)} + f^{(n)}=0, \quad n \ge 1, \quad
k^{(0)}\equiv 1,
\end{equation}
where $\hat L_n^{\ast}, \, f^{(n)}$ are defined as in \eqref{f}-\eqref{korf}.

In the sequel, we say that $k:\Ga_{0}\to \R$ solves the system \eqref{Last} in the Banach spaces $(\XX_{n})_{n\geq 1}$ if the corresponding $k^{(n)}\in\XX_{n}$, $n\geq 1$ and $\{k^{(n)}\}_{n\geq 0}$ solves \eqref{Last}.

We prove the existence of a solution to the system (\ref{Last}) in the Banach spaces $(\XX_{n})_{n\geq 1}$, which
has a specified asymptotic whenever $|x_i - x_j|\to\infty$
for all $i \neq j$. Moreover,  we show that the
solutions to the Cauchy problem \eqref{59} with initial correlation functions belonging to $\K$ converge to the
solutions of (\ref{Last}) as time tends to infinity. These results are stated in the following theorem.

\begin{theorem}\label{mainth} {\it Let $d=1,2$. Assume that the dispersal kernel of the
contact model satisfies conditions \eqref{2a}-\eqref{2c.2}. Then the following assertions hold.

{(i)} For any positive constant $\varrho >0$ there exists a
unique probability measure $\mu^{\varrho}$ on $\Ga$ such that its
correlation function $k_{\varrho}: \Ga_{0}\to\R_{+}$ solves \eqref{Last}  in the Banach spaces $(\XX_{n})_{n\geq 1}$,  the corresponding system $\{k_{\varrho}^{(n)}\}_{n\geq 1}$ satisfies $k_\varrho^{(1)}\equiv \varrho$ and
\begin{equation}\label{Th1}
| k^{(n)}_\varrho (x_1, \ldots, x_n) \ - \ \varrho^n | \ \to \ 0, \ \text{whenever}  \ |x_i - x_j| \to \infty \ \text{for all} \  i \neq j.
\end{equation}
Moreover, there exist positive constants $C,\,D$ such that
\begin{equation}\label{estimate}
k^{(n)}_\varrho (x_1, \ldots, x_n) \ \le   D  C^n (n!)^2
\quad  \text{for all} \quad (x_1, \ldots, x_n)\in(\X)^{n}.
\end{equation}

{(ii)} Let $\{k_{t}^{(n)}\}_{n\geq 1}$ be the
solution to (\ref{59}) with initial value from $\K$ in the Banach spaces $(\XX_{n})_{n\geq 1}$. Then, there exists $\rho>0$ such that
\begin{equation}\label{Th1-2}
\| k_{t}^{(n)} \ - \ k_\varrho^{(n)} \|_{\XX_n} \ \to \ 0, \quad t \to \infty, \quad \forall n\geq 1.
\end{equation}
}
\end{theorem}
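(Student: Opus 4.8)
The plan is to exploit the triangular structure of the recurrent systems \eqref{59} and \eqref{Last}: the equation for $k^{(n)}$ couples to $k^{(n-1)}$ only through the source $f^{(n)}$ of \eqref{f}, so both assertions can be proved by induction on $n$. For (i) I would first observe that \eqref{k0-1} is forced: since $f^{(1)}\equiv0$ and the critical condition \eqref{2a} gives $\hat L_1^{\ast}\varrho=-\varrho+\varrho\|a\|_{L^1}=0$, the constant $\varrho$ solves the $n=1$ equation, and this constant is precisely the conserved density. For $n\ge2$ I write $k^{(n)}_\varrho=\varrho^n+r^{(n)}$; because $\hat L_n^{\ast}$ annihilates constants, \eqref{Last} reduces to $\hat L_n^{\ast}r^{(n)}=-f^{(n)}$, with $f^{(n)}$ already fixed by the previous step. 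The operator \eqref{korf} is $\hat L_n^{\ast}=-nI+\sum_i A_i$, where $A_i$ is convolution by $a$ in the $i$-th variable; on translation invariant functions the full Fourier transform is supported on $\{p_1+\cdots+p_n=0\}$ and $\hat L_n^{\ast}$ becomes multiplication by $-\sum_{i=1}^n\bigl(1-\hat a(p_i)\bigr)$, so I would invert it by setting
\[
\hat r^{(n)}(p_1,\dots,p_n)=\frac{\hat f^{(n)}(p_1,\dots,p_n)}{\sum_{i=1}^n\bigl(1-\hat a(p_i)\bigr)}.
\]

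The crux, and the point where the heavy-tail hypotheses \eqref{2c.1}--\eqref{2c.2} enter, is showing this formula yields a genuine element of $\XX_n$ with the decay \eqref{Th1}. By \eqref{less1} the denominator vanishes only at $p_1=\cdots=p_n=0$, and its real part $\sum_i(1-\mathrm{Re}\,\hat a(p_i))$ is strictly positive elsewhere; the tail asymptotics \eqref{2c.1}--\eqref{2c.2}, which in both cases read $a(x)\sim|x|^{-(d+\alpha)}$ with $0<\alpha<d$, give $1-\hat a(p)\sim c|p|^\alpha$ as $p\to0$. Since $1/|p|^\alpha$ is locally integrable exactly when $\alpha<d$, this is the low-dimensional replacement for the behaviour $1-\hat a(p)\sim c|p|^2$ used when $d\ge3$ in \cite{KKP}. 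Combining local integrability near the origin with the regularity condition \eqref{2b} and the decay of $\hat f^{(n)}$ at large momenta gives $\hat r^{(n)}\in L^1$, so $r^{(n)}$ is bounded, continuous and vanishes as all pairwise distances grow, proving \eqref{Th1}; moreover $\hat L_n^{\ast}$ has trivial kernel on this decaying subspace, which forces uniqueness of $r^{(n)}$ and hence of $\mu^\varrho$. Tracking constants through the convolution-and-inversion step, using the $\sim n^2$ terms in \eqref{f} and the inductive bound on $k^{(n-1)}$, yields the Ruelle-type estimate \eqref{estimate} with the factor $(n!)^2$. To recover the measure $\mu^\varrho$ I would check positivity and Lenard-type positive definiteness of $\{k^{(n)}_\varrho\}$, inherited as a limit of genuine correlation functions (cf. part (ii)), and invoke the moment-problem reconstruction for point processes, the $(n!)^2$ growth being admissible under the determinacy criterion.

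For (ii) I again induct on $n$. The case $n=1$ is immediate: $k^{(1)}_t\equiv\varrho$ for all $t$, since $f^{(1)}\equiv0$ and $\hat L_1^{\ast}$ kills constants, so the density is conserved and fixes the target $\varrho$. Assuming $k^{(n-1)}_t\to k^{(n-1)}_\varrho$, the source of \eqref{59} converges, so for $g^{(n)}_t:=k^{(n)}_t-k^{(n)}_\varrho$ one obtains the linear equation $\partial_t g^{(n)}_t=\hat L_n^{\ast}g^{(n)}_t+h^{(n)}_t$ with $h^{(n)}_t:=f^{(n)}_t-f^{(n)}_\varrho\to0$ in $\XX_n$ and $g^{(n)}_0$ in the decaying subspace. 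By Duhamel, $g^{(n)}_t=e^{t\hat L_n^{\ast}}g^{(n)}_0+\int_0^t e^{(t-s)\hat L_n^{\ast}}h^{(n)}_s\,ds$. On functions whose Fourier transform is integrable and bounded near the origin the semigroup decays polynomially: rescaling $q=t^{1/\alpha}p$ in $\int e^{-t\sum_i(1-\hat a(p_i))}\,dp$ over the $(n-1)d$-dimensional hyperplane gives $\|e^{t\hat L_n^{\ast}}\|\lesssim t^{-(n-1)d/\alpha}$, and since $(n-1)d/\alpha\ge d/\alpha>1$ for every $n\ge2$ this decay is time-integrable. Splitting the Duhamel integral at $t/2$, using integrable decay on $[0,t/2]$ and the smallness of $h^{(n)}_s$ on $[t/2,t]$, together with $e^{t\hat L_n^{\ast}}g^{(n)}_0\to0$, gives \eqref{Th1-2}.

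I expect the main obstacle to be the Fourier/heavy-tail analysis at the origin: converting the real-space asymptotics \eqref{2c.1}--\eqref{2c.2} into the sharp statement $1-\hat a(p)\sim c|p|^\alpha$, and then wielding it twice, once for local integrability of the stationary resolvent and once for the time-integrable decay $t^{-(n-1)d/\alpha}$ of the semigroup on the decaying subspace, while keeping every estimate uniform enough to propagate through the induction and to guarantee that the admissible class \eqref{k0}--\eqref{rij} is preserved by the flow. A secondary delicate point is the reconstruction of $\mu^\varrho$ from correlation functions growing like $(n!)^2$, which needs a positive-definiteness and determinacy argument rather than the classical Ruelle bound.
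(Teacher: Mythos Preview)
Your inductive skeleton matches the paper's, but there is a genuine gap in the Fourier step that breaks both (i) and (ii) for $n\ge 3$.

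You write $\hat r^{(n)}=\hat f^{(n)}/\sum_i(1-\hat a(p_i))$ and then argue that $\hat r^{(n)}\in L^1$ on the hyperplane $\{\sum p_i=0\}$, and similarly you claim $\|e^{t\hat L_n^{\ast}}\|\lesssim t^{-(n-1)d/\alpha}$ on the ``decaying subspace''. But look at the structure of $f^{(n)}$: by induction $k^{(n-1)}=\varrho^{n-1}+r^{(n-1)}$, so each summand $f_{i,j}$ in \eqref{f} contains a piece $\varrho^{n-1}a(x_i-x_j)$ depending on a \emph{single} difference. On the $(n-1)d$-dimensional hyperplane its Fourier transform is a measure supported on the $d$-dimensional subspace $\{p_k=0,\ k\neq i,j\}$, not an $L^1$ density. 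Hence $\hat f^{(n)}$ is a singular measure for $n\ge3$, your quotient is not an $L^1$ function, and the rescaling $q=t^{1/\alpha}p$ over the full hyperplane, which would give $t^{-(n-1)d/\alpha}$, simply does not apply to these objects. The effective Fourier dimension for the dominant pieces is $d$, not $(n-1)d$, and the corresponding time integral $\int_0^\infty\int_{\R^d}e^{s(\hat a(p)+\hat a(-p)-2)}|\hat a(p)|\,dp\,ds$ is exactly the borderline object whose finiteness has to be proved (this is the paper's Lemma~\ref{lem1}).

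The paper circumvents the singular Fourier picture by never taking a full $n$-variable transform. Instead it uses that $e^{t\hat L_n^{\ast}}$ is \emph{positive} on the Banach lattice $\XX_n$ (Lemma~\ref{3.2}): from $f_{i,j}\le K_{n-1}\,a(\cdot_i-\cdot_j)$ one gets the pointwise bound $e^{t\hat L_n^{\ast}}f_{i,j}\le K_{n-1}\,e^{t(L^i+L^j)}a(\cdot_i-\cdot_j)$, and the right-hand side is a function of $x_i-x_j$ alone. This reduces every estimate, for all $n$, to the single $d$-dimensional integral $\int_{\R^d}\frac{|\hat a(p)|}{2-\hat a(p)-\hat a(-p)}\,dp$, whose convergence under the heavy-tail hypotheses is Lemma~\ref{lem1}. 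The stationary solution is then written as $k_\varrho^{(n)}=\varrho^n+\int_0^\infty e^{t\hat L_n^{\ast}}f^{(n)}\,dt$, the bound \eqref{estimate} comes from the recursion $K_n\le Cn^2K_{n-1}+\varrho^n$, and for (ii) the Duhamel integral is split at a fixed $\tau$ with the tail controlled by the same $d$-dimensional integral and the head by the contraction property $\|e^{s\hat L_n^{\ast}}\|_{\XX_n}\le1$ together with the inductive convergence of $f_s^{(n)}$. Your plan can be repaired, but only by inserting this positivity-plus-reduction-to-$n=2$ mechanism in place of the full hyperplane Fourier inversion.
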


\section{The proof of Theorem \ref{mainth} (i). Stationary problem.}

In this section we prove the first part of Theorem \ref{mainth} using the
induction in $n\in\N$.
For $n=1$ in (\ref{Last}) we have
\begin{equation}\label{8}
-k^{(1)}(x) + \int\limits_{\R^{d}} a(x-y) k^{(1)}(y) dy = 0.
\end{equation}
It follows immediately that $k^{(1)} \equiv \varrho $ is an element of $\XX_{1}$ and it solves \eqref{8}.
%
%
%
We notice that $\varrho$ can be interpreted as the spatial density of particles.
To solve the equation \eqref{Last} for the case $n=2$ in $\XX_{2}$ we need the following lemma.

\bigskip

\begin{lemma}\label{lem1}
Conditions (\ref{2b}) - (\ref{2c.2}) imply
\begin{equation}\label{L1}
\int\limits_{\mathbb{R}^d} \frac{|\hat a(p)| \, dp}{2 - \hat a(p) - \hat a(-p)} \ < \ \infty.
\end{equation}
\end{lemma}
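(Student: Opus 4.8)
The plan is to rewrite the integrand as a real quantity and analyse its two potentially singular regions, the origin and infinity, separately. Since $a$ is real-valued, $\hat a(-p)=\overline{\hat a(p)}$, so the denominator equals $2-2\,\mathrm{Re}\,\hat a(p)=2\int_{\R^{d}}\bigl(1-\cos(p,u)\bigr)a(u)\,du\ge 0$, where I used the normalization $\int a=1$ from \eqref{2a}; it vanishes only at $p=0$. Away from the origin the strict bound $|\hat a(p)|<1$ of \eqref{less1} gives $\mathrm{Re}\,\hat a(p)<1$, while the Riemann--Lebesgue lemma (applicable since $\hat a\in L^1$ by \eqref{2b}, and $a\in L^1$) yields $\hat a(p)\to0$, so $2-2\,\mathrm{Re}\,\hat a(p)\to2$ as $|p|\to\infty$. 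By continuity and compactness the denominator is therefore bounded below by some $c_\delta>0$ on $\{|p|\ge\delta\}$, and since the numerator satisfies $|\hat a(p)|\le\|a\|_{L^1}=1$ everywhere while $\hat a\in L^1$, the tail part obeys $\int_{|p|\ge\delta}\frac{|\hat a(p)|}{2-\hat a(p)-\hat a(-p)}\,dp\le c_\delta^{-1}\|\hat a\|_{L^1}<\infty$.

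The core of the argument is the behaviour near $p=0$, where I will establish $2-2\,\mathrm{Re}\,\hat a(p)\ge c\,|p|^{\alpha}$ for $|p|\le\delta$. Note that both heavy-tail hypotheses \eqref{2c.1}--\eqref{2c.2} assert $a(x)\sim|x|^{-(d+\alpha)}$ as $|x|\to\infty$, with $0<\alpha<d$. Writing $p=r\omega$, $|\omega|=1$, and substituting $u=v/r$ (so that $(p,u)=(\omega,v)$), one gets
\[
1-\mathrm{Re}\,\hat a(r\omega)=r^{-d}\int_{\R^{d}}\bigl(1-\cos(\omega,v)\bigr)\,a(v/r)\,dv ,
\]
hence $r^{-\alpha}\bigl(1-\mathrm{Re}\,\hat a(r\omega)\bigr)=\int_{\R^{d}}\bigl(1-\cos(\omega,v)\bigr)\,r^{-(d+\alpha)}a(v/r)\,dv$. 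For fixed $v\neq0$ the tail asymptotic gives $r^{-(d+\alpha)}a(v/r)\to|v|^{-(d+\alpha)}$ as $r\to0$, uniformly on each annulus $\{\varepsilon\le|v|\le M\}$. Restricting the non-negative integrand to such an annulus and passing to the limit yields
\[
\liminf_{r\to0}\ \inf_{|\omega|=1} r^{-\alpha}\bigl(1-\mathrm{Re}\,\hat a(r\omega)\bigr)\ \ge\ \inf_{|\omega|=1}\int_{\varepsilon\le|v|\le M}\frac{1-\cos(\omega,v)}{|v|^{d+\alpha}}\,dv .
\]
The right-hand side is strictly positive: the integral is continuous and positive in $\omega$ on the compact unit sphere, so its infimum is attained and positive. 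This provides the desired bound $1-\mathrm{Re}\,\hat a(p)\ge c|p|^{\alpha}$ for $|p|\le\delta$.

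Combining the pieces finishes the proof. On $\{|p|\le\delta\}$ the numerator is bounded by $1$, so by the lower bound just obtained
\[
\int_{|p|\le\delta}\frac{|\hat a(p)|}{2-\hat a(p)-\hat a(-p)}\,dp\ \le\ \frac{1}{2c}\int_{|p|\le\delta}\frac{dp}{|p|^{\alpha}}\ <\ \infty ,
\]
the last integral converging precisely because $\int_{|p|\le\delta}|p|^{-\alpha}\,dp\sim\int_0^\delta r^{\,d-1-\alpha}\,dr$ and $\alpha<d$ (that is, $0<\alpha<1$ for $d=1$ and $0<\alpha<2$ for $d=2$). Adding this to the tail estimate from the first paragraph gives \eqref{L1}. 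The step I expect to be most delicate is the near-origin lower bound, specifically making the limit uniform in the direction $\omega$: one must exclude both small $|v|$ (where $1-\cos(\omega,v)$ degenerates) and large $|v|$ (where the remainder in the tail asymptotic is not controlled uniformly), which is exactly why I integrate over a fixed annulus before taking $\liminf$ and then invoke compactness of the sphere. It is also worth emphasizing that the condition $\alpha<d$ is what distinguishes the present heavy-tail regime from the finite-variance case, in which $1-\mathrm{Re}\,\hat a(p)\sim c|p|^{2}$ and \eqref{L1} diverges for $d=1,2$.
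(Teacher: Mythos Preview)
Your proof is correct and follows the same overall outline as the paper's: split the integral into a neighbourhood of the origin and its complement, control the complement via $\hat a\in L^1(\R^d)$ together with the strict inequality $|\hat a(p)|<1$ for $p\ne0$ (so the denominator is bounded below away from $0$), and use the heavy-tail condition to handle the singularity at $p=0$. The one noteworthy difference is in the near-origin step. The paper quotes an external reference (Koralov--Sinai, Lemma~10.18) for the two-sided asymptotic $2-\hat a(p)-\hat a(-p)=c_1|p|^{\alpha}+o(|p|^{\alpha})$, whereas you supply a self-contained scaling argument (substitute $u=v/r$, restrict to an annulus, use the tail asymptotic pointwise, then compactness of the sphere) that yields only the one-sided lower bound $1-\mathrm{Re}\,\hat a(p)\ge c|p|^{\alpha}$. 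Since only the lower bound is needed for \eqref{L1}, your version is slightly more economical and avoids the external citation; the paper's cited asymptotic, on the other hand, is reused later in Remark~\ref{Remark3} to show divergence in the light-tail regime, which your one-sided bound would not give.
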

\begin{proof} Using the same arguments as in \cite[\S 10.5, Lemma 10.18]{SK} we conclude from (\ref{2c.1}) - (\ref{2c.2})
that for all $0 < \alpha <2$ and for $d=1,2$:
\begin{equation}\label{L1.1}
\hat a(p) + \hat a(-p)  = 2 - c_1 |p|^\alpha + o(|p|^\alpha) \quad \mbox{as } \; |p| \to 0
\end{equation}
where $c_1$ is some constant. In addition,  $\hat a(p) \in C(\R^{d}) $, $|\hat a(p)|<1$ for all $p \neq 0$,  and $\hat a(p) \to 0$ for $|p|\to \infty$. Thus $|1 -\hat a(p)|$ is separated from 0 for $|p|>\epsilon$ for any $\epsilon>0$. This implies that
\begin{equation}\label{L1.2}
\int\limits_{|p| \le 1} \frac{|\hat a(p)| \, dp}{2- \hat a(p)  - \hat a(-p)} \ < \ \infty.
\end{equation}
The convergence of the integral in (\ref{L1}) follows now from (\ref{L1.2}) and regularity condition (\ref{2b}) at infinity.
\end{proof}
Recall, that the equation \eqref{Last} for $n=2$ has form
\begin{equation}\label{13}
\hat L^{\ast}_2 k^{(2)} + f^{(2)}=0.
\end{equation}
Here
\begin{equation}\label{14}
f^{(2)}(x_1, x_2) \ = \ \varrho( a(x_1- x_2) + a(x_2- x_1)).
\end{equation}
and the operator $\hat L^{\ast}_2 \ = \  L^{(1)} + L^{(2)}$, where
\begin{equation}\label{15}
L^{(1)} k^{(2)}(x_1, x_2) \ = \ \int_{\R^{d}} a(x_1 - y)
k^{(2)}(y, x_2) dy - k^{(2)}(x_1, x_2),
\end{equation}
\begin{equation}\label{16}
L^{(2)} k^{(2)}(x_1, x_2) \ = \ \int_{\R^{d}} a(x_2 - y)
k^{(2)}(x_1, y) dy - k^{(2)}(x_1, x_2).
\end{equation}
Using the translation invariant property of the searched solution we get
$$
k^{(2)}(x_1, x_2) \ = \ k^{(2)}(x_1 - x_2, 0):=u^{(2)}(x_{1}-x_{2})
$$
In terms of the function $u^{(2)}$, the equation \eqref{13} has now the following form
\begin{align}
 \begin{aligned}\label{u2}
\int_{\R^{d}} a(x - y)
u^{(2)}(y) dy &+\int_{\R^{d}} a(y - x)
u^{(2)}(y) dy - 2u^{(2)}(x) \\
&=-\varrho( a(x) + a(- x)).
\end{aligned}
\end{align}
It is easy to check that the solution to this equation in the space $\BB(\X)$ always exists. Indeed, taking into account Lemma \ref{lem1}, we see at once that
\begin{equation}
u^{(2)}(x):=\frac{\varrho}{(2\pi)^{d}}\int_{\R^{d}}e^{i(p,x)}\frac{ \hat a (p) + \hat a (-p)}
{2 - \hat a(p) - \hat a (-p)}dp
\end{equation}
solves \eqref{u2} and it is an element of $\BB(\X)$.
Moreover, for any constant $A\in \R$ the function $$k^{(2)}(x_1, x_2):=u^{(2)}(x_{1}-x_{2})+ A$$  solves \eqref{13} and it is an element of $\XX_{2}$.
%
%

Now let us turn to the general case. If for any $n>1$ we succeed to
solve equation (\ref{Last}) and express $k^{(n)}$ through
$f^{(n)}$, then knowing the expression of $f^{(n)}$ through
$k^{(n-1)}$ (see (\ref{f})), we get the solution $\{k^{(n)}\}_{n\geq 1}$ to the full system
(\ref{Last}) recurrently.
%

Recalling the definition of the norm on $\XX_{n}$, $n\geq 1$, we see that $|f| \leq |g|$ implies
$$
||f||\leq ||g||,\quad \text{for all}\quad  f, g \in \XX_{n}.
$$
These properties make the space $\XX_{n}$ a Banach lattice.
\begin{lemma} \label{3.2}
The operator $e^{t \hat L_n^{\ast}}$ is
positive on the Banach lattice $\XX_{n}$.
\end{lemma}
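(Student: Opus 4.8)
The plan is to split $\hat L^{\ast}_n$ into its diagonal (multiplication) part and an integral part, and to exploit the fact that the latter is positivity-preserving. Concretely, I would write
\begin{equation*}
\hat L^{\ast}_n = -n\, I + A_n,
\end{equation*}
where $I$ is the identity on $\XX_{n}$ and $A_n$ is the integral operator
\begin{equation*}
(A_n k)(x_1, \ldots, x_n) = \sum_{i=1}^n \int_{\R^{d}} a(x_i - y)\, k(x_1, \ldots, x_{i-1}, y, x_{i+1}, \ldots, x_n)\, dy.
\end{equation*}

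First I would observe that $A_n$ maps $\XX_{n}$ into itself: the convolution against $a$ preserves translation invariance (a change of variables $y \mapsto y + u$ in each summand shows $(A_n k)(x_1 + u, \ldots, x_n + u) = (A_n k)(x_1, \ldots, x_n)$), and boundedness is already granted by the Remark preceding the lemma, with $\|A_n\| \le n \| a \|_{L^1} = n$ by the critical-regime normalization \eqref{2a}. The crucial point is that $A_n$ is a \emph{positive} operator on the Banach lattice $\XX_{n}$: since $a \ge 0$, for every $k \ge 0$ the function $A_n k$ is again non-negative.

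Next, since $-n\,I$ is a scalar multiple of the identity it commutes with $A_n$, so the bounded operators exponentiate multiplicatively,
\begin{equation*}
e^{t \hat L^{\ast}_n} = e^{-n t}\, e^{t A_n}, \qquad t \ge 0.
\end{equation*}
Because $A_n$ is bounded I may expand the second factor as a norm-convergent power series $e^{t A_n} = \sum_{m=0}^\infty \frac{t^m}{m!}\, A_n^m$. For $t \ge 0$ every coefficient is non-negative, and each power $A_n^m$ is positive as a composition of positive operators; hence $e^{t A_n}$ is positive. Multiplying by the strictly positive scalar $e^{-nt}$ preserves positivity, which yields the claim.

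I do not expect a genuine obstacle here: boundedness of $\hat L^{\ast}_n$ (and thus of $A_n$) has already been recorded, and the argument reduces to the elementary lattice fact that the exponential of a positive bounded operator, shifted by a multiple of the identity, stays positive. The only points requiring a line of verification are the stability of $\XX_{n}$ under $A_n$ (translation invariance, handled by the change of variables above) and the non-negativity of the convolution kernel, both immediate. Should one prefer an abstract formulation, the same conclusion follows from the characterization that a bounded generator produces a positive semigroup precisely when it equals a positive operator minus a multiple of the identity; the explicit series computation above is simply the constructive version of this fact.
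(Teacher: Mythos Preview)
Your argument is correct and matches the paper's own proof almost exactly: the paper also splits off the diagonal part and writes $e^{t\hat L_n^\ast}$ as $e^{-t}$ times the exponential of the positive convolution operators, the only cosmetic difference being that it factors variable-by-variable as $\otimes_{i=1}^n e^{tL^i}=\otimes_{i=1}^n e^{-t}e^{tA^i}$ rather than lumping all convolutions into a single $A_n$.
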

\begin{proof}
It is evident that
$$
A^i k^{(n)}(x_1, \ldots, x_n) \ := \ \int\limits_{\R^{d}} a(x_i- y) k^{(n)} (x_1, \ldots, x_{i-1}, y,
x_{i+1}, \ldots, x_n) dy.
$$
is positive and bounded on $\XX_{n}$ for any $1\leq i\leq n$. Taking into account
$$
\hat L_n^{\ast}  \ = \ \sum_{i=1}^n L^i,\quad
e^{t \hat L_n^{\ast}} \ = \ \otimes_{i=1}^n  e^{t L^{i}}, \quad e^{t
L^{i}} \ = \ e^{-t} e^{t A^{i}},
$$
where
\begin{align}
\begin{aligned}\label{21}
L^{i} k^{(n)}(x_1, \ldots, x_n) \ = &\int_{\R^{d}} a(x_i- y) k^{(n)}(x_1, \ldots, x_{i-1}, y, x_{i+1}, \ldots,
x_n) dy\\
& - k^{(n)}(x_1, \ldots, x_n).
\end{aligned}
\end{align}
we get the desired conclusion.
\end{proof}

Next we will construct a solution to the system (\ref{Last}) satisfying \eqref{Th1}-\eqref{estimate}.
As follows from (\ref{f}), the function $f^{(n)}$ is the sum of
functions of the form
\begin{equation}\label{32}
f_{i,j} (x_1, \ldots, x_n)  =  k^{(n-1)} (x_1,\ldots,\check{x_i},
\ldots, x_n)  a(x_i - x_j), \quad i\neq j.
\end{equation}
%
We suppose by induction that
$$
k^{(n-1)} (x_1, \ldots, x_{n-1}) \ \le \ K_{n-1}, \quad \text{for all } \; (x_1, \ldots, x_{n-1})\in(\X)^{n-1},\quad n\geq 2,
$$
where $K_n = D C^n (n!)^2$, and $D, C$ are some constants. Consequently,
\begin{equation}\label{34}
f_{i,j}(x_1, \ldots, x_n) \ \le \  K_{n-1} a(x_i - x_j),\quad (x_1, \ldots, x_{n})\in(\X)^{n-1}.
\end{equation}
Using the positivity of the operator $e^{t \hat L_n^{\ast}}$ and (\ref{34}) we have
\begin{align}
\begin{aligned}\label{36}
\left(e^{t \hat L_n^{\ast}} f_{i,j} \right) (x_1, \ldots, x_{n})\ \le \  K_{n-1} \ \left(e^{t \hat L_n^{\ast}}
a(\cdot_i - \cdot_j) \right)(x_1, \ldots, x_{n}).
\end{aligned}
\end{align}
An easy observation $e^{t L^{i}}1\!\!1=1\!\!1$, $\forall i=1,\,\ldots, n,$
 $1\!\!1(x)\equiv 1$, shows
\begin{equation}\label{36_1}
\left(e^{t \hat L_n^{\ast}}
a(\cdot_i - \cdot_j)\right) (x_1, \ldots, x_{n})\  =  \
\left(e^{t ( L^i + L^j)} a(\cdot_i - \cdot_j)\right) (x_1, \ldots, x_{n}).
\end{equation}
Note that the latter function depends only on variables $x_{i}$ and $x_{j}$. Indeed, it follows from the identity
$$
[(L^i + L^j) a(\cdot_i - \cdot_j)](x_1, \ldots, x_{n})=(a\ast a)(x_{i}-x_{j}) +(a\ast a)(x_{j}-x_{i})-2a(x_{i}-x_{j}),
$$
where the symbol $\ast$ stands for the convolution of functions.

Set
$$
L\!\!\!L v (x):= (a\ast v)(x) +(a\ast v)(-x)-2v(x).
$$
Clearly, the operator $L\!\!\!L$ is bounded in $L^{1}(\R^{d})$ and if, additionally, $\hat v\in L^{1}(\R^{d})$ then
\begin{equation}
e^{tL\!\!\!L}v(x)=\frac{1}{(2 \pi)^d}\int_{\X} e^{i(p,\,x)} e^{t (\hat a (p) + \hat a (-p) -2)} \hat v(p)dp.\label{LL}
\end{equation}
Moreover,
$$
[(L^i + L^j) a(\cdot_i - \cdot_j)](x_1, \ldots, x_{n})=L\!\!\!L  a(x_{i}-x_{j})
$$
and thus
\begin{equation}
e^{t ( L^i + L^j)} a(\cdot_i - \cdot_j) (x_1, \ldots, x_{n})=e^{tL\!\!\!L}a(x_{i}-x_{j}).\label{ij}
\end{equation}
Using \eqref{2a}-\eqref{2b}, \eqref{less1}, and  \eqref{36}-\eqref{ij}
we conclude
\begin{equation}\label{36_3}
||e^{t \hat L_n^{\ast}} f_{i,j}||_{\XX_{n}}\leq  \frac{K_{n-1}}{(2 \pi)^d}\int_{\R^{d}}  e^{t (\hat a (p) + \hat a (-p) -2)} |\hat a(p)|
dp\to 0,\quad t\to\infty.
\end{equation}
The latter convergence is due to the Lebesgue dominated convergence theorem.

We next show that $e^{t \hat L_n^{\ast}} f_{i,j}$ is integrable with respect to $t$ on $\R_{+}$. According to \eqref{36_3}
\begin{equation}\label{vij}
v^{(n)}_{i,j} \ = \ \int_0^{\infty} e^{t \hat L_n^{\ast}} f_{i,j} \ dt \
\le
 \frac{Z K_{n-1}}{(2 \pi)^d},
\end{equation}
where
\begin{align}
\begin{aligned} \label{A}
Z  = \int_0^{\infty} \int_{\R^{d} } e^{t (\hat a (p) + \hat a (-p) -2)} |\hat a(p)|
dp dt   \le \int_{\R^{d}} \frac{|\hat a(p)|}{2- \hat a
(p) - \hat a (-p)} dp < \infty
\end{aligned}
\end{align}
due to the Fubini theorem and Lemma \ref{lem1}.

Our next goal is to show that $$k^{(n)}  = \sum_{i \neq j} v^{(n)}_{i,j}=\int_0^{\infty} e^{t \hat L_n^{\ast}} f^{(n)}$$ is a solution to \eqref{Last} in $\XX_{n}$.
It is easily seen from \eqref{vij} and induction procedure that $k^{(n)}\in\XX_{n}$. Since $e^{t \hat L_n^{\ast}}$ is a strongly continuous semigroup we have
$$
e^{t \hat L_n^{\ast}}f^{(n)}-f^{(n)}=\hat L_n^{\ast}\int_{0}^{t}e^{s \hat L_n^{\ast}}f^{(n)}ds.
$$
A passage to the limit as $t\to\infty$ together with  \eqref{36_3} shows that $k^{(n)}$ is a solution to \eqref{Last} in $\XX_{n}$.

Since the function $f^{(n)}$ is the sum
of functions $f_{i,j}$, $i\neq j$ we deduce that $ k^{(n)}$
is bounded by $C n^2 K_{n-1}$
for some $C>0$. Thus we get the recurrence inequality
\begin{equation}\label{50}
K_n \ \le \ C n^2 K_{n-1},
\end{equation}
and by induction it follows that
\begin{equation}\label{49}
K_n \ \le \ C^n \, (n!)^2.
\end{equation}
Thus
\begin{equation}\label{49A}
k^{(n)} (x_1, \ldots, x_n) \ \le \ C^n \, (n!)^2.
\end{equation}
Moreover, using \eqref{36}-\eqref{ij} we have
\begin{align}
\begin{aligned}
v^{(n)}_{i,j}(x_1, \ldots, x_n)  & =   \int_0^{\infty} \left( e^{t \hat L_n^{\ast}} f_{i,j} \right) (x_1, \ldots, x_n) dt \\
&\le
\frac{K_{n-1}}{(2 \pi)^d}
\int_0^{\infty} \int_{\R^{d}}  e^{t (\hat a (p) + \hat a (-p) -2)}{\hat a(p) \ e^{ip(x_i - x_j) }} dpdt.
\end{aligned}
\end{align}
Integrability of the function $e^{t (\hat a (p) + \hat a (-p) -2)}{|\hat a(p)|}$ and the
Lebesgue-Riemann lemma imply that the function $v^{(n)}_{i,j}$ satisfies the following condition:
\begin{equation}\label{38A}
v^{(n)}_{i,j}(x_1, \ldots, x_n) \ \to \ 0 \quad \mbox{ if } \;
|x_i - x_j|  \to \infty.
\end{equation}
Consequently,
\begin{equation}\label{100}
k^{(n)} (x_1, \ldots, x_n) \ = \
\sum_{i \neq j} v^{(n)}_{i,j}  (x_1, \ldots, x_n) \ \to \ 0,
\end{equation}
whenever $|x_i - x_j|  \to \infty$ for all $i \neq j$.
Eventually, we have constructed $\{k^{(n)}\}_{n\geq 1}$ satisfying estimate (\ref{49A}) and condition (\ref{100}).

Of course, any functions of the form
$$
k^{(1)}\equiv \varrho,\quad k^{(n)} (x_1, \ldots, x_n) \ = \ \int\limits_0^{\infty} e^{t \hat
L_n^{\ast}} f^{(n)} (x_1, \ldots, x_n) \ dt \ +  \ A_n,\quad n \geq 2
$$
where $A_n$ are arbitrary constants, are solution to the system
(\ref{Last}) too. Among different $A_{n}$ we have to find such constants $A_{n}$ for which
\begin{equation}\label{51}
| k^{(n)} (x_1, \ldots, x_n) \ - \ \varrho^n| \ \to \ 0, \; \mbox{ whenever }  |x_i - x_j|
\to \infty
\end{equation}
for all $i\neq j$. Taking $A_n=\varrho^n$ we conclude that
\begin{equation}\label{52}
k^{(1)}_\varrho\equiv\varrho,\quad k^{(n)}_\varrho \ = \  \int\limits_0^{\infty} e^{t \hat L_n^{\ast}} f^{(n)}
dt \ + \ \varrho^n,\quad n\geq 2
\end{equation}
is the wanted solution to \eqref{Last} in the Banach spaces $(\XX_{n})_{n\geq 1}$. To emphasize the dependence of $f^{(n)}$ on $\varrho$, we will sometimes use notation $f_{\varrho}^{(n)}$  for $f^{(n)}$. It is clear that the last term in (\ref{52}) vanishes under the
action of $\hat L_n^{\ast}$, and (\ref{51}) holds because of
\eqref{38A}.
For the solutions $\{ k_{\varrho}^{(n)} \}_{n\geq 1}$ instead of (\ref{50}) we have the recurrence
\begin{equation}\label{53}
K_n \ \le \ C n^2 K_{n-1} \ + \ \varrho^n,
\end{equation}
which yields
\begin{equation}\label{55}
K_n \ \le \ D C^n (n!)^2.
\end{equation}

To be certain that the constructed system $\{ k_{\varrho}^{(n)} \}_{n\geq 1}$ is a system of correlation functions, i.e., it corresponds to a probability measure $\mu^\varrho$ on the configuration space $\Gamma$, we will
prove in the next section that $\{ k_{\varrho}^{(n)} \}_{n\geq 1}$ can be constructed as the limit when $t \to \infty$ of the system
of correlation functions $\{ k_{t}^{(n)}\}_{n\geq 1}$ associated with the solution to the Cauchy problem \eqref{59} with
corresponding initial data from $\K$.

\section{The proof of Theorem \ref{mainth} (ii).  }

In this section we find the solution to the Cauchy problem
(\ref{59}), (\ref{k0-1}) - (\ref{rij}) and prove the relation (\ref{Th1-2}) using the method of mathematical induction.
By the variation of parameters formula we have
\begin{equation}\label{61}
k_{t}^{(n)} \ = \ e^{t \hat L_n^{\ast}} k_{0}^{(n)} \ + \  \int\limits_0^t e^{(t-s) \hat
L_n^{\ast}} f_s^{(n)} \ ds,
\end{equation}
where $f_s^{(n)}$ is expressed through $k_s^{(n-1)}$ by
(\ref{f}). Using the identity $$ \hat L_n^{\ast}
k_\varrho^{(n)} \ = \ - f_\varrho^{(n)},$$
where
$$
f_\varrho^{(n)}(x_1, \ldots, x_n) \ = \ \sum_{i,j:\ i\neq j}
k_\varrho^{(n-1)}(x_1, \ldots,\check{x_i}, \ldots, x_n) \ a(x_i-x_j),
$$
we get
$$
\left( e^{t \hat L_n^{\ast}} - E \right) k_\varrho^{(n)} \ = \ -
\int\limits_0^t \frac{d}{ds} e^{(t-s) \hat L_n^{\ast}} k_\varrho^{(n)} ds \
\ = \ - \int\limits_0^t e^{(t-s) \hat L_n^{\ast}}  f_\varrho^{(n)} \ ds,
$$
and therefore
\begin{equation}\label{64}
k_{t}^{(n)} -  k_\varrho^{(n)} \ = \
e^{t \hat L_n^{\ast}}(k_{0}^{(n)} - k_\varrho^{(n)}) \ + \  \int\limits_0^t
e^{(t-s) \hat L_n^{\ast}} (f_{s}^{(n)} -  f_\varrho^{(n)}) \ ds.
\end{equation}
We will prove now that both terms in the right-hand side of (\ref{64}) converge to 0 in
the norm of $\XX_n$.

The identity \eqref{k0} and the inversion formula (\ref{52})
yield
\begin{equation}\label{65}
e^{t \hat L_n^{\ast}} \big( k_{0}^{(n)} - k_\varrho^{(n)} \big) \ = \ e^{t \hat
L_n^{\ast}}\big( r^{(n)} - v^{(n)} \big)  \ = \  e^{t \hat
L_n^{\ast}} r^{(n)} - e^{t \hat L_n^{\ast}} v^{(n)},
\end{equation}
where
\begin{equation}\label{66}
v^{(n)} \ = \ \int_0^{\infty} e^{s \hat L_n^{\ast}} f_\varrho^{(n)}
\ ds.
\end{equation}
We consider  terms $e^{t \hat L_n^{\ast}} v^{(n)}$ and $ e^{t \hat
L_n^{\ast}} r^{(n)}$  in (\ref{65}) separately.

The first term can be estimated using the inequality in \eqref{36_3} and bound \eqref{55}. As a result, we get
$$
\left| \left( e^{t \hat L_n^{\ast}} \ v^{(n)} \right) (x_1, \ldots,
x_n) \right|
$$
$$
 \le \frac{D C^{n-1} ((n-1)!)^2}{(2 \pi)^d} \ \sum_{i,j:\, i\neq j} \int_{\R^{d}}  \int_{0}^{\infty}e^{(t+s) (\hat a (p) + \hat a (-p) -2)} |\hat a(p)|dsdp
$$
$$
 \le \frac{D C^{n-1} (n!)^2}{(2 \pi)^d} \ \int_{\R^{d}} e^{t (\hat a (p) + \hat a (-p) -2)} \frac{|\hat a(p)|}{2-\hat a (p) - \hat a (-p)}dp
$$
Due to the Lebesgue dominated convergence theorem and Lemma \ref{lem1} the latter integral converges to zero as $t$ tends to $\infty$. Hence, $$||e^{t \hat L_n^{\ast}} \ v^{(n)}||_{{\XX}_n}\to 0,\quad t\to\infty.$$

The second term  $ e^{t \hat L_n^{\ast}} r^{(n)}$  in (\ref{65}) can be handled in much the same way, if instead of
$\frac{|\hat a (p)|}{2 - \hat a (p) - \hat a (- p)} $ we consider
$$
\frac{|\hat r^{(n)}_{ij} (p)|}{2 - \hat
a (p) - \hat a (- p)} \quad \mbox{with } \; \hat r^{(n)}_{ij} (p) \in L^1(\R^{d}).
$$

Our next goal is to show that
\begin{equation}\label{81}
\int\limits_0^t e^{(t-s) \hat L_n^{\ast}} (f_{s}^{(n)} - f_\varrho^{(n)}) \
ds \ \to 0
\end{equation}
in sup-norm when $t \to \infty$. To this end we use the induction with respect to  $n$.
The base case of induction is satisfied since
\begin{equation}\label{82}
k_{0}^{(1)}(x) \ = \   k_{t}^{(1)}(x) \ \equiv \ k_\varrho^{(1)}(x) \ = \ \varrho.
\end{equation}
Let us assume induction step
\begin{equation}\label{ind}
\| k_{t}^{(n-1)} \ - \  k_\varrho^{(n-1)} \|_{\XX_{n-1}} \ \to \ 0
\quad \mbox{ as} \quad t \to \infty,\quad n>2.
\end{equation}
The latter convergence implies that
\begin{equation}\label{83}
\| k_{t}^{(n-1)} \|_{\XX_{n-1}}  \ \le \  M_{n-1} \quad \mbox{ for all
} \; t \ge 0
\end{equation}
with some positive constant depending only on $n$. Indeed, the
operator $\hat L_n^{\ast}$ is bounded and the function $a$ is
bounded, hence the norm of the solution $k_{t}^{(n)}$ of the problem
(\ref{59}) (with any bounded for $l \le n$ initial data) is
evidently bounded on any compact time interval $[0,\tau]$. On the
other hand, for any $\varepsilon >0$ there exists $\tau$ such that
for all $t > \tau$ the norm $\| k_{t}^{(n-1)} - k_\varrho^{(n-1)}\|_{\XX_{n-1}}
<\varepsilon$ by (\ref{ind}). Thus the bound (\ref{83}) is proved.

From (\ref{ind}) and (\ref{f}) it follows that
\begin{equation}\label{83A}
\| f_{t}^{(n)} \ - \  f_\varrho^{(n)}\|_{\XX_n} \ \to 0 \quad \mbox{
as} \; t \to \infty.
\end{equation}
To estimate the integral (\ref{81}) we rewrite it as a sum
\begin{equation}\label{84}
\left( \int\limits_0^\tau \ + \ \int\limits_\tau^t \right) e^{ s \hat L_n^{\ast}}
(f_{t-s}^{(n)} - f_\varrho^{(n)}) \ ds.
\end{equation}
Let us estimate the second integral in (\ref{84}) using Lemma \ref{3.2}:
\begin{align}
\begin{aligned}\label{85}
\left| \int_\tau^t  e^{s \hat L_n^{\ast}} (f_{t-s}^{(n)} -
f_\varrho^{(n)}) ds \right| & \le  \int_\tau^t  e^{s \hat
L_n^{\ast}} ( |f_{t-s}^{(n)}| + |f_\varrho^{(n)}| ) \ ds \\
& \le ( M_{n-1}  +  \| k_\varrho^{(n-1)} \|_{\XX_{n-1}} )  \int_\tau^t
e^{s \hat L_n^{\ast}} \sum_{i \neq j} a(\cdot_i - \cdot_j) \ ds.
\end{aligned}
\end{align}
Then using the same arguments as above we conclude that it will be sufficient to estimate for any pair $i
\neq j$ the following integral
\begin{equation}\label{89}
\int\limits_\tau^{t} \int\limits_{\R^{d}} e^{s(\hat a(p) + \hat a (-p) -2)}
|\hat a (p)| \ dp \ ds \ \le \  \int\limits_\tau^{\infty} \int\limits_{\R^{d}}
e^{s(\hat a(p) + \hat a (-p) -2)} |\hat a (p)| \ dp \
ds
\end{equation}

Since the integral
\begin{equation}\label{90}
\int\limits_0^{\infty} \int\limits_{\R^{d}} e^{s(\hat a (p) + \hat a (-p)
-2)} |\hat a (p)| \ dp \ ds \ = \ \int\limits_{\R^{d}} \frac{|\hat a
(p)|}{2 - \hat a (p) - \hat a (- p)} \ dp
\end{equation}
converges, then the integral (\ref{89}) tends to 0 when $\tau \to
\infty$. Consequently we can take $\tau$ in such a way that
(\ref{89}) is less than $\varepsilon$, and then (\ref{85}) is less
than $C_n \varepsilon$ for some $C_n$ and any $t>\tau$.

Finally let us estimate the first integral in (\ref{84}) for a given
$\tau$:
\begin{equation}\label{91}
\int\limits_0^\tau  e^{ s \hat L_n^{\ast}} (f_{t-s}^{(n)} - f_\varrho^{(n)})
\ ds.
\end{equation}
From (\ref{83A}) it follows that we can choose $t_0>\tau$ such that
for $t> t_0$ the following estimate holds
$$
\| f_{t-\tau}^{(n)} - f_\varrho^{(n)} \|_{\XX_n} \ < \
\frac{\varepsilon}{\tau}.
$$
Consequently the norm of (\ref{91}) is less than $\varepsilon$.
Finally, for $t>t_0$ the integral in (\ref{81}) is less than
$(C_n + 1) \varepsilon$ in sup-norm and convergence (\ref{81}) as well as
(\ref{Th1-2}) is proved.

\medskip

Thus we proved the strong convergence (\ref{Th1-2}), and the proof of the second part of Theorem 1 is completed.

The final step of the proof of the first part of Theorem 1 is to show that the system of correlation functions $\{k_\varrho^{(n)} \}$ corresponds to a probability measure $\mu^\varrho$ on the configuration space $\Gamma$. We will use here the following Proposition summarizing results of two papers \cite{L1} and \cite{L2} of A. Lenard.

\medskip

\begin{proposition}(see \cite{L1}, \cite{L2}) \label{propos2} If the system of correlation functions $\{k^{(n)} \}$ satisfies Lenard positivity and moment growth conditions then there exists a unique probability measure $\mu \in {\cal M}^1_{fm}(\Gamma)$, locally absolutely continuous with respect to a Poisson measure, whose correlation functions are exactly $\{k^{(n)} \}$.
\end{proposition}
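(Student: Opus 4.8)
The plan is to recover $\mu$ from its correlation functions through the $K$-transform duality \eqref{F6}, treating this as an infinite-dimensional moment (realizability) problem. Writing $\ell(KG):=\langle G,k\rangle$ for $G\in B_{bs}(\Ga_0)$, the injectivity of the $K$-transform \eqref{K-transform} makes $\ell$ a well-defined linear functional on the space $K(B_{bs}(\Ga_0))$ of functions on $\Ga$, and a measure with the prescribed correlation functions is precisely one for which $\ell(F)=\int_\Ga F\,d\mu$ on this space. The role of the two hypotheses is then transparent: \emph{Lenard positivity} is exactly the statement that $\ell$ is positive, i.e. $\ell(KG)\ge 0$ whenever $(KG)(\ga)\ge 0$ for all $\ga\in\Ga$, while the \emph{moment growth} condition supplies both the finiteness of local moments (so that $\mu\in\M^1_{fm}(\Ga)$) and the determinacy needed for uniqueness.

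First I would establish existence by localization. For a fixed $\La\in\CB_b(\X)$, I would invert the $K$-transform on the finite-configuration space $\Ga_0(\La)$ by M\"obius (inclusion--exclusion) inversion, producing a candidate density
$$
R_\La(\eta)\;=\;\int_{\Ga_0(\La)}(-1)^{|\xi|}\,k(\eta\cup\xi)\,d\la(\xi),\qquad \eta\in\Ga_0(\La),
$$
with respect to the Lebesgue--Poisson measure $\la$. The moment growth bound guarantees that this alternating series converges absolutely and defines an integrable function, while Lenard positivity is precisely what forces $R_\La\ge 0$ $\la$-a.e., so that $d\mu_\La:=R_\La\,d\la$ is a genuine nonnegative measure on $\Ga(\La)$; the normalization $k^{(0)}\equiv1$ makes it a probability measure. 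I would then check that the family $\{\mu_\La\}_{\La\in\CB_b(\X)}$ is consistent under the restriction maps $\Ga(\La')\to\Ga(\La)$ for $\La\subset\La'$, which holds by construction since all the $\mu_\La$ are built from the single system $\{k^{(n)}\}$.

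Next, since $\Ga$ equipped with the vague topology is a Polish space, I would invoke the Kolmogorov-type projective limit theorem to glue the consistent family $\{\mu_\La\}$ into a single probability measure $\mu$ on $(\Ga,\CB(\Ga))$; local finiteness of the limiting configurations (no escape of mass to infinity) is controlled by the same moment growth bound. Reversing the inversion recovers relation \eqref{F6}, so the correlation functions of $\mu$ are exactly the prescribed $\{k^{(n)}\}$ and $\mu$ is locally absolutely continuous with respect to the Poisson measure by construction. For uniqueness I would argue that moment growth makes the local moment problem determinate: the distribution of $|\ga_\La|$, and more finely the density $R_\La$, is uniquely fixed by its moments, hence any two admissible measures agree on $\mathcal F_{cyl}(\Ga)$, a determining class, and therefore coincide on $\CB(\Ga)$.

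The main obstacle is the existence half, specifically the passage from the \emph{abstract} positivity of $\ell$ to the \emph{pointwise} nonnegativity of the reconstructed densities $R_\La$, carried out simultaneously with control of the inversion series. This is the infinite-dimensional analogue of Haviland's theorem: positivity alone yields a representing measure only on a suitable compactification, and the moment growth condition is what confines the measure to the locally finite configurations $\Ga$ and rules out mass escaping to infinity. Making this step fully rigorous---rather than merely citing \cite{L1,L2}---is where Lenard's careful finite-volume approximation and the precise interplay between the two hypotheses are genuinely needed.
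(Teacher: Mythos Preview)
The paper does not prove this proposition at all: it is stated as a summary of Lenard's results in \cite{L1,L2} and is invoked as a black box, with the paper only spelling out the two hypotheses (Lenard positivity \eqref{LP} and moment growth \eqref{MG}) and then checking them for the particular system $\{k_\varrho^{(n)}\}$. So there is no ``paper's own proof'' to compare against; your sketch is, in effect, an attempt to unpack the cited references.

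As a sketch of Lenard's argument your outline is broadly on target---positive functional on $K(B_{bs}(\Ga_0))$, local reconstruction of densities, projective-limit gluing, and Carleman-type determinacy for uniqueness---but one point deserves a flag. You write that ``the moment growth bound guarantees that this alternating series converges absolutely,'' referring to the M\"obius inversion formula for $R_\La$. The moment growth condition \eqref{MG} is a Carleman-type divergence condition and does \emph{not} by itself imply absolute convergence of the inversion series $\int_{\Ga_0(\La)}(-1)^{|\xi|}k(\eta\cup\xi)\,d\la(\xi)$; that would require something closer to a Ruelle bound. In Lenard's actual argument the existence of the local measures $\mu_\La$ is obtained not by summing that series but via a Riesz/Hahn--Banach representation of the positive functional $\ell$, and the moment growth condition enters only to force the representing measure to live on $\Ga$ (locally finite configurations rather than a compactification) and to secure uniqueness. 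Your final paragraph correctly identifies this as the delicate step, but the earlier attribution of absolute convergence to \eqref{MG} is not right and would not go through as stated.
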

\medskip

For the convenience of the reader we formulate these conditions below.

\noindent {\it Lenard positivity.} $KG \ge 0$ for any $G \in B_{bs}(\Gamma_0)$ implies
\begin{equation}\label{LP}
\sum_{n=0}^{\infty} \frac{1}{n!} \int\limits_{\R^{d}} \ldots \int\limits_{\R^{d}} G^{(n)}(x_1, \ldots, x_n) k^{(n)}(x_1, \ldots, x_n) dx_1 \ldots dx_n \ge 0.
\end{equation}

\medskip

\noindent {\it Moment growth.} For any bounded set $\Lambda \subset \R^{d}$ and $j \ge 0$
\begin{equation}\label{MG}
\sum_{n=0}^{\infty} (m^\Lambda_{n+j})^{- \frac{1}{n}} \ = \ \infty,
\end{equation}
where
$$
m^\Lambda_{n} = (n!)^{-1} \int\limits_\Lambda \ldots \int\limits_{\Lambda} k^{(n)}(x_1, \ldots, x_n) dx_1 \ldots dx_n.
$$

\medskip

In our case it follows from  (\ref{55}) that  $\big( m_n^\Lambda \big)^{-\frac1n} \ge \frac{\tilde C}{n}$. Thus condition (\ref{MG}) of the uniqueness holds.
To obtain the Lenard positivity condition (\ref{LP}) we use that $\{ k_{\varrho}^{(n)} \}$ was constructed as the limit when $t \to \infty$ of the system of correlation functions $\{ k_{t}^{(n)} \}$
associated with the solutions of the Cauchy problem (\ref{59}) with initial data satisfying (\ref{k0-1}) - (\ref{rij}) and corresponding to some measure  $\mu_0 \in {\cal M}^1_{\rm{corr}}(\Gamma)$ (e.g. the Poisson measure):
\begin{equation}\label{limk}
k^{(n)}_\varrho \ = \ \lim_{t\to\infty} k_{t}^{(n)}.
\end{equation}
Using results from \cite{KKP} (Proposition 4.4 and Corollary 4.1) we can conclude that for any $t>0$ the solution $\{
k_{t}^{(n)} \}$ of the Cauchy problem (\ref{59}), (\ref{k0-1})-(\ref{rij}) satisfies condition (\ref{LP}) of Lenard positivity, see Appendix for the detailed proof of this important statement.
Consequently, the limit system of correlation functions $k^{(n)}_\varrho $
also satisfies the Lenard positivity condition (\ref{LP}).

Thus Proposition \ref{propos2} implies that there exists a unique probability measure $\mu^\varrho \in {\cal M}^1_{\rm{corr}}(\Gamma)$, locally absolutely continuous with respect to a Poisson measure $\pi_\varrho$, whose correlation functions are $\{k^{(n)}_\varrho \}$.
This completed the proof of Theorem \ref{mainth}.

\section {Concluding remarks.}

\begin{remark} {\bf  Law of large numbers.} Theorem \ref{mainth} implies the law of large numbers for the number of particles, i.e. the
existence of the spatial density of particles:
$$
\frac{N(V)}{|V|} \ \to \ \varrho, \quad \mbox{ as } \ |V| \to \infty,
$$
with convergence in probability.
\end{remark}
\begin{proof} Let us define the random variable
$$
N(V) (\gamma) = \sum_{x \in \gamma} \chi\big._V  (x), \quad \chi\big._V \; \mbox{is the  characteristic function of } \; V,
$$
equals to the number of particles in the domain $V\subset \R^{d}$. By the definition of correlation functions we have
$$
\mathbb{E} [N(V)] = \int\limits_{\R^{d}} \chi_V (x) k_\varrho^{(1)}(x) dx = \int\limits_V  k_\varrho^{(1)}(x) dx = \varrho |V|, \quad \mbox{where } \; |V|= vol \, V.
$$
Analogously,
$$
\mathbb{E} [N(V)(N(V) - 1)] = \int\limits_{V} \int\limits_{V}  k_\varrho^{(2)}(x_1, x_2) dx_1 dx_2.
$$
So the variance of $N(V)$ equals to
\begin{equation}\label{D1}
Var ( N(V)) = \int\limits_{V} \int\limits_{V}  k_\varrho^{(2)}(x_1, x_2) dx_1 dx_2 + \varrho |V| - \big( \varrho |V| \big)^2,
\end{equation}
and
\begin{equation}\label{D2}
 Var \Big(\frac{N(V)}{|V|} \Big) = \frac{1}{|V|^2} \int\limits_{V} \int\limits_{V} \big( k_\varrho^{(2)}(x_1, x_2) - \varrho^2 \big) dx_1 dx_2 + \frac{\varrho}{ |V|}.
\end{equation}
Theorem \ref{mainth} implies the correlation decay
$$
| k^{(2)}_\varrho (x_1, x_2) \ - \ \varrho^2 | \ \to \ 0,
$$
when  $|x_1 - x_2| \to \infty$.
Consequently both terms in (\ref{D2}) tend to 0 when $|V| \to \infty$, and
the direct application of the Chebyshev inequality gives the convergence of $\frac{N(V)}{|V|}$  to $ \varrho$ in probability.
\end{proof}
\medskip



\begin{remark}
We can include a possibility for particles to jump. The analogous model has been considered earlier in \cite{KKS}.
More precisely, let us consider the following heuristic generator
$L\ + \ L_J$, where $L$  was defined by (\ref{generator}) and
$$
L_J F(\gamma)\ = \  \int\limits_{\R^{d}} \sum_{x\in \gamma}
J(x-y) \Big( F ((\gamma\setminus x) \cup y)-F(\gamma) \Big) \, dy
$$
Then in Lemma \ref{lem1} it will appear in the denominator
$$
2 \ - \ \hat{a}(p) \ - \ \hat{a}(-p) \  + \ \hat J (0) \ -  \ \hat{J}(p), \quad  \hat J (0) =  \int\limits_{\R^{d}} J(u) du,
$$
instead of $2 - \hat{a}(p) -  \hat{a}(-p) $, and, hence, integrability \eqref{L1} for any $a$ is satisfied provided the jump
kernel $J$ has heavy tails.

The interpretation of this effect is the following: if individuals of a population have
possibility to choose between breeding and emigration far from their homeland, then in the critical regime the density of clusters
is decaying enough establishing regular pair correlation.
\end{remark}


\medskip

\begin{remark}\label{Remark3} Let $k_{0}^{(n)}(x_1, \ldots, x_n)= \varrho^n$, and one of the following two conditions holds: \\
$$
\text{{\bf A1)}}\quad a(x) \sim \frac{1}{|x|^{\alpha+1}} \quad \mbox{as } \; |x| \to \infty \; \mbox{ with } \; 1 \le \alpha < 2 \quad \mbox{in the case } \quad d=1,\quad
$$
\medskip
or \\
{\bf A2)} $\int\limits_{\R^{d}} |x|^2 a(x) dx < \infty$ in the case $d=1,2$.

Then
\begin{equation}\label{P1}
k_{t}^{(2)}(0,0) \to \infty \quad \mbox{as } \; t \to \infty.
\end{equation}
\end{remark}
\begin{proof} Using (\ref{f}) we have for any $t \ge 0$:
$$
f_{t}^{(2)}(x_1, x_2) = \varrho (a(x_1 - x_2) + a(x_2 - x_1)).
$$
Since the operator $L_n^{\ast}$ annihilate constants, then we get from (\ref{61})
\begin{align}\label{P1-61}
\begin{aligned}
k_{t}^{(2)}(x_1 - x_2, 0)  =  e^{t \hat L_2^{\ast}} \varrho^2  +   \int\limits_0^t e^{(t-s) \hat
L_2^{\ast}} f_{s}^{(2)}(x_{1},x_{2})  ds \\
 = \varrho \int\limits_0^t e^{s \hat
L_2^{\ast}}   (a(\cdot_1 - \cdot_2) + a(\cdot_2 - \cdot_1))(x_{1},x_{2}) ds + \varrho^2.
\end{aligned}
\end{align}
If $x_1 = x_2$, then
$$
k_{t}^{(2)}(0, 0)  = \varrho \int\limits_0^{t} \int\limits_{\R^{d}} e^{s (\hat a (p) + \hat a (-p) -2)} \
 (\hat a (p) + \hat a(-p)) ds dp + \varrho^2.
$$
If condition {\bf A1} is fulfilled, then decomposition (\ref{L1.1}) from the proof of Lemma \ref{lem1} implies
$$
\int\limits_{\R^{d}} e^{s (\hat a (p) + \hat a (-p) -2)} \
 (\hat a (p) + \hat a(-p)) dp \sim s^{-1/\alpha} \quad \mbox{as } \; s \to \infty,
$$
and since $\int\limits_0^t s^{-1/\alpha} ds \to \infty \; (t \to \infty)$ for $1\le \alpha<2$, then we obtain (\ref{P1}).

Under condition {\bf A2} we get
$$
\hat a(p) + \hat a(-p) = 2 - c_1|p|^2 + o(|p|^2) \quad \mbox{as } \; |p| \to 0,
$$
and
$$
\int\limits_{\R^{d}} e^{s (\hat a (p) + \hat a (-p) -2)} \
 (\hat a (p) + \hat a(-p)) dp \sim s^{-d/2} \quad \mbox{as } \; s \to \infty.
$$
Consequently in the case $d=1,2$ we again obtain (\ref{P1}). The similar estimates yield that $k_{t}^{(2)}(0, x) \to \infty$ for any $x$ as $t \to \infty$.
\end{proof}

The growth (in $n$) of correlation functions shows a presence
of strong clustering in the system. For dispersal kernels
with short range these clusters become so dense that
the pair correlation function grows to infinity as $t \to \infty$.

\section{Appendix.}

Partially following \cite{KS} we provide the explicit explanation why the functions $k_{t}^{(n)}$ constructed in Section 4  are Lenard positive, i.e. satisfy (\ref{LP}). The main difference with \cite{KS} is that here we deal only with the contact process on the space $\Gamma_0$ of finite configurations.
We show first that for any starting point $\gamma \in \Gamma_0$ the contact process exists as a Markov process in $\Gamma_0$.
The formal generator $L$ can be interpreted as a generator of a Markov process on the space $\Gamma_0$ of finite configurations, i.e. finite subsets $\gamma \subset \mathbb{R}^d$,
\begin{equation}\label{App-1}
L f(\gamma) = 2 |\gamma| \left\{  \int_{\Gamma_0} f(\gamma') Q(\gamma, d\gamma') - f(\gamma) \right\},
\end{equation}
where $|\gamma|$ is the number of points in the configuration $\gamma$.
The transition kernel $Q(\gamma, d\gamma')$ on $\Gamma_0$ takes the form:
\begin{equation}\label{App-2}
Q(\gamma, d\gamma') = \frac{1}{2|\gamma|} \Big\{ \sum\limits_{x \in \gamma} \delta_{\gamma\backslash x}(d\gamma') + \sum\limits_{y \in \gamma} \int\limits_{\mathbb{R}^d} a(x-y) \delta_{\gamma\cup x} (d\gamma') dx  \Big\}.
\end{equation}
An application of the jump Markov processes theory gives us the existence of  a probability space $(\Omega, {\rm P}, \mathcal{F})$ and a Markov process $\gamma (t) \in \Gamma_0, \, t< \tau_{\infty}$ with the generator \eqref{App-1}, where $\tau_{\infty}$ is the life time of the process ($\tau_{\infty}$ is a random time when the number of particles of $\gamma(t)$ becomes infinite). It is easy to see the regularity of the process, i.e. that
\begin{equation}\label{App-3}
{\rm P} \big( \tau_{\infty} = +\infty   \big) = 1.
\end{equation}
Indeed, for the function $|\gamma|:\Gamma_0 \to \mathbb{N}\cup 0$  we get  $L|\gamma| = 0$.
We have the representation
\begin{equation}\label{App-4}
|\gamma(t)| = |\gamma(0)| + \int\limits_0^t L|\gamma(s)| ds + M_t,
\end{equation}
where $M_t$ and so  $|\gamma(t)| $ are local martingales with the localizing sequence of optional times (stopping times) $\tau_N = \inf \{ t: \, |\gamma(t)| \ge N \}$. From the theory of branching processes it is known that
\begin{equation}\label{App-4bis}
{\rm P} \{ \tau_N  \le t\} \le C(t) (1- e^{-t})^N
\end{equation}
(it is sufficient to consider a pure birth process with the birth rate equal to 1, see \cite{GS}, Chapter VII, Sec. 5).
Since ${\rm P}(\tau_{\infty}\le t) \le {\rm P}(\tau_N \le t) \; \forall N \in \mathbb{N}$, then \eqref{App-4bis} implies ${\rm P}(\tau_{\infty}< \infty)=0$,
and the regularity \eqref{App-3} holds. Moreover
$$
N^k \Pr  \{ \tau_N  \le t\} \to 0 \qquad \mbox{for any } \; k>0 \; \mbox{ when } \; N \to \infty,
$$
and using Proposition 1.8 from \cite{CKL} we conclude that the local martingale $ |\gamma(t)| $ is a martingale.

We will give now a constructive description of this process.
Namely, for a given configuration $\gamma(0)=\gamma,\, |\gamma|< \infty$,
the contact process started from $\gamma$ has the following structure. For any point $y \in \gamma$ there is the rate of birth $\lambda_b$ and the rate of death $\lambda_d$. In the critical regime $\lambda_b = \lambda_d$. We put here $\lambda_b = \lambda_d =1$. As a result of birth the point $y$ creates a new point   of a configuration in a random position $x \in \mathbb{R}^d$. The distribution density of $x$ equals to $a(x-y)$. The parent $y$ remains to exist after the birth and can produce new offsprings. The total number $|\gamma(t)|$ of points is an integer-valued birth and death process, i.e. the Markov process on integers $n \in \mathbb{N}\cup \{0\} $ with the rates $\lambda_d \cdot n$ for the transition $n \to n-1$ and $\lambda_b \cdot n$ for $n \to n+1$.
This process is defined for all $t>0$ and $\mathbb{E} |\gamma(t)| = |\gamma(0)|$ in the critical case or $\mathbb{E} |\gamma(t)| = e^{(\lambda_b - \lambda_d) t} |\gamma(0)|$ in the general case.
The number of transitions of the process $\gamma(t)$ on any time interval $[0,t]$ is a.s. finite. For simplicity
we assume that $a(x)$ has a compact support. The general case can be considered using approximation arguments from \cite{KKP}, Corollary 4.1. At time t any point of configuration $\gamma(t)$ is a descendant of one of the points of the initial configuration $\gamma(0)$.

Let the initial configuration $\gamma \in \Gamma_0$ be random and it has a probability distribution  $\mu_0 \in {\cal M}^1_{\rm{corr}}(\Gamma_0)$ such that the correlation functions  $k^{(n)}_0$ are bounded for each $n=1,2, \ldots$. We suppose that the random variable $|\gamma|$, which is equal to the number of particles at time $t=0$ has all finite moments. Then from the theory of branching random processes it follows that the same holds for $|\gamma(t)|$ at any $t>0$.
We denote by  $\mu_t$ the distribution of configurations $\gamma(t) \in \Gamma_0$. The contact process on $\Gamma_0$ with the starting configuration $\gamma \in \Gamma_0$ can be interpreted as a probability measure on the space of finite forests, i.e. the finite sets of trees in space-time (trajectories of the process).



\medskip

In addition to the existence of the Markov process $\mu_t$ on $\Gamma_0$ we need to prove the existence and boundedness of correlation functions corresponding to measures $\mu_t$ at any time $t>0$. Also we have to show that the correlation functions satisfy the differential equations (\ref{59}).

From the construction of the contact Markov process $\gamma(t) \in \Gamma_0$ by the same reason as above
(with the same localizing sequence $\tau_N$) applying Proposition 1.8 from \cite{CKL}
we conclude that the random function
\begin{equation}\label{Mar1}
f(\gamma(t)) - \int\limits_0^t \big( Lf \big) (\gamma(s)) \, ds
\end{equation}
is a martingale for $f= KG$, where $G \in B_{bs}(\Gamma_0)$, $L$ is a formal generator given by (\ref{generator}), the mapping $K$ was defined in (\ref{K-transform}) and $f$ is restricted to $\Gamma_0$, i.e.
$$
f(\gamma) = \sum\limits_{\eta \subseteq \gamma} G(\eta), \qquad \gamma, \eta \in \Gamma_0.
$$
Consequently we have
\begin{equation}\label{Mar2}
\mu_t (f) - \int\limits_0^t \mu_s \big( Lf \big) \, ds = \mu_0 (f)
\end{equation}
for $f = KG$, i.e.
\begin{equation}\label{Mar3}
\mu_t (KG) - \int\limits_0^t \mu_s \big( L KG \big) \, ds = \mu_0 (KG).
\end{equation}
As it was mentioned above $|\gamma(t)|$ has all finite moments for arbitrary $t\geq 0$. The latter fact means that the correlation measure $\rho_t$ corresponding to the probability measure $\mu_{t}$ by the formula
\begin{equation}\label{Mar3bis}
\rho_t (G) \ = \ \mu_t (KG), \quad G \in B_{bs}(\Gamma_0)
\end{equation}
exists, and $\rho_t$ is finite.


It follows from (\ref{generator}) and  (\ref{K-transform})  that if  $G \in B_{bs}(\Gamma_0)$ then $K \hat L G = L K G $ is defined and  $\hat L G \in B_{bs}(\Gamma_0)$. This implies
\begin{equation}\label{Mar4}
\mu_t (KG) - \int\limits_0^t \mu_s \big( K \hat L G \big) \, ds = \mu_0 (KG)
\end{equation}
or equivalently,
\begin{equation}\label{Mar5}
\rho_t (G) - \int\limits_0^t \rho_s \big( \hat L G \big) \, ds = \rho_0 (G).
\end{equation}
The equation (\ref{Mar5}) is a weak form of the equation (\ref{59}) in terms of correlation measures.
{From (\ref{Mar5}) it follows that $\rho_t(G)$ is a continuous function of $t $ for any $G$ and we can rewrite  (\ref{Mar5}) as a Cauchy problem
$$
\frac{d \rho_t(G)}{dt} = \rho_t(\hat L G) \quad \mbox{with a given } \; \rho_0(G) \; \mbox{ for } \; t=0.
$$
It should be noted that the above reasoning is valid not only for  $G \in B_{bs}(\Gamma_0)$, but also for the functions $G$ of the form $G=(G_1, \ldots ,G_N)$ with some $N >0$, where each $G_n, \, 1\le n \le N$, is a bounded measurable function of space variables, e.g. $G=(G_1), \ G_1 \equiv 1$ corresponds to $KG=|\gamma|$.}

Next we use the Holmgren's principle, see \cite{Shubin}, to prove that the solution of (\ref{Mar5}) coincides with the strong solution of   \eqref{59}. Notice, that if the initial data $k_0^{(n)} \in L^1 ( (\mathbb{R}^d)^n )$, then we obtain that $k_t^{(n)} \in L^1 ( (\mathbb{R}^d)^n)$ for the strong solution of equation (\ref{59}) at any $t>0$. Here we apply the same reasoning as above in Remark 2.3.
Let us consider the adjoint equation $\frac{\partial G}{\partial t}= \hat L G$ where $\hat L$ is defined in \eqref{prop1}. We consider this equation on the space of finite sequences $G_1, \ldots , G_N$ of bounded (in space variables) functions for some $N>0$ (here $G_n=0$ for $n>N $ by definition). An existence of the solution of the adjoint equation follows from the direct calculation. Then the Holmgren's principle says that the uniqueness of the solution of (\ref{Mar5})  in the space of finite measures follows from the existence of the solution in the space of bounded functions for the adjoint equation. Therefore, being unique the weak solution $\rho_t$ has to be absolutely continuous with respect to the Lebesgue-Poisson measure, and the corresponding density $k_{t}$ a.e. coincides with the strong solution of \eqref{59}.
Since the weak solution corresponds to the evolution of measure and thus satisfies the Lenard positivity condition, then the same is valid for the strong solution as well.

Now we proceed to the case when the initial correlation functions  $k_{0}^{(n)}$ correspond to a measure $\mu_0$ on $\Gamma$ (not on $\Gamma_0$).
Let us consider a sequence of expanding balls $B_r \subset \mathbb{R}^d \ (r \to \infty)$ centered at $0$. Then for any set of initial conditions  $k^{(n)}_0$ corresponding to some initial measure $\mu_0 \in {\cal M}^1_{\rm{corr}}(\Gamma)$ we define
$$
k^{(n)}_{0, B_r} (x_1, \ldots, x_n) = k^{(n)}_{0} (x_1, \ldots, x_n) \prod_{i=1}^n \chi_{B_r} (x_i),
$$
where $\chi_B$ is the indicator of $B$. The set of functions $k^{(n)}_{0, B}$ corresponds to a measure $\mu_{0,B}$ on $\Gamma_0$ which is a direct image of the measure $\mu_0$ under the restriction map $\gamma \to \gamma\cap B$.
If the initial conditions  $k_{0}^{(n)}$ are bounded in space variables, then the time evolution  $k^{(n)}_{t, B}$ of the set  $k^{(n)}_{0, B}$ via  \eqref{59} corresponds to the measure $\mu_{t,B}$, that is the time evolution of the measure  $\mu_{0,B}$ on $\Gamma_0$.
Thus the set $k^{(n)}_{t, B}$ is Lenard positive. To prove that the solution  $k_{t}^{(n)}$ of the Cauchy problem (\ref{59}) is Lenard positive it is sufficient to verify that
\begin{equation}\label{App1}
k^{(n)}_{t, B_r} \to k_{t}^{(n)} \quad \mbox{as }\; r \to \infty
\end{equation}
provided that  $k^{(n)}_{t, B}$ are bounded uniformly in $B$.

To prove \eqref{App1} we use the special property of monotonicity of equation \eqref{59}. Namely, if
$$
k^{(n)}_0 (x_1, \ldots, x_n) \le \tilde k^{(n)}_0 (x_1, \ldots, x_n),
$$
then
$$
k^{(n)}_t (x_1, \ldots, x_n) \le \tilde k^{(n)}_t (x_1, \ldots, x_n) \quad \mbox{for any } \; t>0
$$
for the considered bounded solutions of  \eqref{59}.
This property follows from Lemma \ref{3.2} (it is connected with the special feature of the contact process, when the mortality rate does not depend on the population density). Consequently we get
\begin{equation}\label{App2}
k^{(n)}_{t,B_1} (x_1, \ldots, x_n) \le  k^{(n)}_{t, B_2} (x_1, \ldots, x_n) \le k^{(n)}_{t} (x_1, \ldots, x_n)
\end{equation}
for all $B_1 \subset B_2$.
Since $k^{(n)}_{0, B}$ are bounded for any $n$ and $B$, then $k^{(n)}_{t, B}$ are also bounded for all $B$ and $t \in (0,T)$, see \cite{KKP}, Prop. 4.4. Using monotonicity in $B$ and boundedness \eqref{App2} we conclude that there exists a pointwise limit
$$
\lim_{r\to \infty}k^{(n)}_{t,B_r}  = \bar k^{(n)}_{t}.
$$
The set of functions $\bar k^{(n)}_{t}$ is Lenard positive, and it remains to prove that   $\bar k^{(n)}_{t}= k^{(n)}_{t}$, where $k^{(n)}_{t}$ is a solution of the Cauchy problem \eqref{59} with initial data $k^{(n)}_{0}$. If we rewrite \eqref{59} as the integral equation on $[0,T]$ and use the Lebesgue's dominated convergence theorem we conclude that  $\bar k^{(n)}_{t}$ is a solution of \eqref{59}. The uniqueness of the solution in $\BB((\R^{d})^n)$ implies that  $\bar k^{(n)}_{t}= k^{(n)}_{t}$.

\bigskip

\noindent {\bf Acknowledgement.} The authors are grateful to Prof. Stanislav Molchanov for fruitful discussions during his visits to the University of Bielefeld.\\
Oleksandr Kutovyi gratefully acknowledges the financial support of the German Research Foundation (DFG) through the IRTG 2235 and CRC 1283.
Sergey Pirogov gratefully acknowledges the financial support of the Russian Science Foundation, Project 17-11-01098.\\
Elena Zhizhina expresses her gratitude to the Mathematical Department of Bielefeld University for kind hospitality.


\begin{thebibliography}{20}





\bibitem{BS} M. Birkner, R. Sun, Low-dimensional lonely branching random walks die out, Ann. Probab.  \textbf{47} (2), p. 774-803, 2019.

\bibitem{CKL} K.L. Chung, R.J. Williams, Introduction to Stochastic Integration, Birkhauser, Boston, 1983.

\bibitem{DS} M. Duneau, B. Souillard, Cluster properties of lattice and continuous systems, Comm. Math. Phys., vol.47,
pp.155-166 (1976).

\bibitem{EN} {K.-J. Engel, R. Nagel}, \emph{One-parameter Semigroups
For Linear Evolution Equations}, Springer, 2000.


\bibitem{GS} I.I. Gikhman, A.V. Skorokhod, Introduction to the Theory of Random Processes, W.B. Saunders Company, 1969.

\bibitem{GW} Gorostiza, L.G., Wakolbinger, A., Persistence criteria for a class of critical branching particle systems in continuous time,
Ann. Probab. \textbf{19} (1991), p. 266-288.


\bibitem{KK2002}
Kondratiev, Y., Kuna, T.: Harmonic analysis on configuration space.
{I}.
  {G}eneral theory.
\newblock Infin. Dimens. Anal. Quantum Probab. Relat. Top. \textbf{5}(2),
  201--233 (2002)

  \bibitem{KK2006}
Y.~Kondratiev and O.~Kutoviy.
\newblock On the metrical properties of the configuration space.
\newblock { Math. Nachr.}, 279\penalty0 (7):\penalty0 774--783, 2006.

\bibitem{KM2008}
Y.~Kondratiev, O.~Kutoviy, and R.~Minlos,
On non-equilibrium stochastic dynamics for interacting particle systems in continuum,
{J.~Funct. Anal.} \textbf{255} (2008), 200--227.

\bibitem{KKP} Yu. Kondratiev, O. Kutoviy, S. Pirogov, Correlation functions
and invariant measures in continuous contact model, Ininite
Dimensional Analysis, Quantum Probability and Related Topics Vol.
11, No. 2, 231-258 (2008)

\bibitem{KKS} Yu. G. Kondratiev, O. V. Kutoviy, S. Struckmeier, Contact model with Kawasaki dynamics in continuum,
SFB-701 Preprint, University of Bielefeld, Bielefeld, Germany (2007).

\bibitem{KMPZ} Yu. Kondratiev, S. Molchanov, S. Pirogov, E. Zhizhina, On ground state of some non local Schrodinger operator,
Applicable Analysis, 96 (8), 2017,  pp. 1390-1400, doi.org/10.1080/00036811.2016.1192138.

\bibitem{KPZh} Yu. Kondratiev, S. Pirogov, E. Zhizhina, A Quasispecies Continuous Contact Model
in a Critical Regime, Journal of Statistical Physics, 163(2), 357-373 (2016), doi:10.1007/s10955-016-1480-5


\bibitem{KS} Yu. G. Kondratiev and A. Skorokhod, On contact processes in continuum,
Ininite Dimensional Analysis, Quantum Probability and Related Topics
Vol. 9, 187-198 (2006)

\bibitem{SK} Koralov L.B., Sinai Y.G., Theory of Probability and Random Processes, Springer, 2007.



\bibitem{L1} A. Lenard, Correlation functions and the uniqueness of the state in classical statistical mechanics,
Comm. Math. Phys. 30, 35-44 (1973).

\bibitem{L2} A. Lenard, States of classical statistical mechanical systems of infinitely many particles, Arch. Rational Mech. Anal.
59, II: 240-256 (1975).

\bibitem{Lig1985}
T.~M. Liggett.
\newblock {\em Interacting particle systems}, volume 276 of {\em Grundlehren
  der Mathematischen Wissenschaften [Fundamental Principles of Mathematical
  Sciences]}.
\newblock Springer-Verlag, New York, 1985.

\bibitem{Shubin} M. Shubin, Invitation to Partial Differential Equations, AMS EPUB online, 2012.

\bibitem{R} D. Ruelle, Statistical Mechanics, Benjamin, 1969.



\end{thebibliography}
\end{document}